\definecolor{darkblue}{rgb}{0,0,0.5}
\definecolor{cerule}{RGB}{53,122,183}
\definecolor{cardinal}{RGB}{184,32,16}
\algrenewcommand\textproc{}     % no smallcaps for procedure names
\def\Return{\textbf{return}\xspace}
\setlist{nosep}
\def\eqdef{=}
\def\mop{\operatorname}
\def\st{\ \middle|\ }
\def\abs#1{\left|#1\right|}
\def\epsilon{\varepsilon}
\def\ud{\mathrm{d}}
\def\crit{\operatorname{crit}}
\def\Q{\mathbb{Q}}
\def\R{\mathbb{R}}
\def\whitney#1{\mathcal{W}{#1}}
\title{Computing the Dimension of Real Algebraic Sets}
\author{Pierre Lairez}
\affiliation{
\institution{Inria}
\country{France}
}
\email{pierre.lairez@inria.fr}
\author{Mohab {Safey El Din}}
\affiliation{%
  \institution{Sorbonne Universit\'e, \textsc{CNRS},
      \textsc{LIP6},
    \'Equipe \textsc{PolSys}}
%  \streetaddress{4 place Jussieu}
  %\city{F-75252, Paris Cedex 05}
  \country{France}
}
\email{mohab.safey@lip6.fr}
\keywords{ Computer algebra; semi-algebraic set; dimension }
\begin{abstract}
  Let $V$ be the set of real common solutions to $F = (f_1, \ldots, f_s)$ in
  $\R[x_1, \ldots, x_n]$ and $D$ be the maximum total degree of the $f_i$'s. % The
  We design an algorithm which on input $F$ computes the dimension of $V$.
  Letting $L$ be the evaluation complexity of $F$ and $s=1$, it runs
  using $O^\sim \big (L D^{n(d+3)+1}\big )$ arithmetic operations in $\Q$
  and at most $D^{n(d+1)}$ isolations of real roots of polynomials of degree at
  most $D^n$.

  Our algorithm depends on the \emph{real} geometry of $V$; its practical
  behavior is more governed by the number of topology changes in the fibers of
  some well-chosen maps. Hence, the above worst-case bounds are rarely reached
  in practice, the factor $D^{nd}$ being in general much lower on practical
  examples. We report on an implementation showing its ability to solve problems
  which were out of reach of the state-of-the-art implementations.
\end{abstract}
\begin{document}
\fancyhead{}

\maketitle

\section{Introduction}\label{sec:intro}

Polynomial system solving over the reals finds numerous applications in
engineering sciences. In mechanism design, mobility properties often translate
to identifying the dimension of a real algebraic set, that is the real solution
set to polynomial equations with real coefficients. This paper focuses on the
design of an algorithm for computing dimension whose practical performance
improves upon previously known algorithms.

\subsubsection*{Prior results}
The cylindrical algebraic decomposition algorithm \cite{Collins_1975} computes a
partition of the semi-algebraic set defined by an input semi-algebraic formula
into finitely many cells which are homeomorphic to $]0,1[^i$ for some integer
$i$. This integer $i$ is the dimension of the corresponding cells. The dimension
of the whole semi-algebraic set is the maximum dimension of its cells. The
computational complexity of this approach is doubly exponential with respect to
the dimension of the ambient space, denoted~$n$.

Another approach \cite{Koiran_1997,Koiran_1999,BasuPollackRoy_2006} makes the
most of the following characterization of the dimension: the dimension of a
semi-algebraic set~$S \subseteq \mathbb{R}^n$ is the maximum integer $d$ such
that the projection of~$S$ on some coordinate subspace of dimension $d$ has
non-empty interior. Quantifier elimination over the reals computes a
semi-algebraic description of the projection of a semi-algebraic set. Hence,
using quantifier elimination algorithms and their complexity analysis, one can
derive the following complexity result. When the input is a single polynomial of
total degree $D$ in $\R[x_1, \ldots, x_n]$, computing the dimension $d$ of its
real vanishing set is performed with $D^{O(d(n-d))}$ arithmetic operations in
$\R$. Note that several polynomial constraints~$f_1,\dotsc,f_s$ can be
substituted by the single constraint~$\sum_i f_i^2$, although in practice we try
not to. A probabilistic variant refines this complexity bound to $O^\sim\big
(n^{16}(D+1)^{3d(n-d)+5n+5}\big )$ arithmetic operations and enables practical
implementations \cite{BannwarthSafeyElDin_2015}.

The dimension of a semi-algebraic set~$S$ is also the Krull dimension of its
real radical \cite{CoTr98,SafeyElDinYangZhi_2021}. Such methods compute a prime
decomposition of the radical ideal of the polynomials vanishing on $S$ and
decide whether associated irreducible components contain regular real points,
yielding a smooth point in $S$. If not, one needs to investigate the singular
loci of the algebraic sets defined by these components. Following this idea of
computing smooth points in~$S$, a dedicated numerical routine procedure for
computing smooth points in semi-algebraic sets based on numerical homotopy and
deflation algorithms is designed in \cite{HarrisHauensteinSzanto_2020}.
Complexity bounds are worse than the state-of-the-art bounds but the numerical
flavour of this algorithm makes it practically fast on well-conditioned
examples.

Lastly, the related problem of computing the local dimension at a point has also been considered
\cite{Vorobjov_1999}.

\subsubsection*{Problem statement} Current state-of-the-art algorithms are not satisfactory from
a practical view: the best algorithms, from a complexity viewpoint, still rely
on quantifier elimination which involves the computation of algebraic or
semi-algebraic formulas whose sizes are $D^{O(d(n-d))}$. This proves to be a
bottleneck. New numerical procedures have emerged but implementations are
vulnerable to accuracy issues and hit the problem of deciding equality to zero
from approximations.

The goal of this paper is to design a new algorithm for computing the dimension
of a \emph{real algebraic set}, practically efficient, which computes (and
stores) algebraic data of size bounded by $(sD)^{O(n)}$.

\subsubsection*{Main results}
Our algorithm takes as input polynomials $f_1, \ldots, f_s$ in $\R[x_1, \ldots,
x_n]$. Let~$D = \max _i \deg(f_i)$ and $V \subseteq \R^n$ be the real algebraic
set defined by $f_1=\cdots=f_s=0$. For a sufficiently generic function~$h \in
\mathbb{R}[x_1,\dotsc,x_n]$, we show that we can compute finitely many
points~$t_0,\dotsc,t_N \in \mathbb{Q}$ such that
\begin{equation}\label{eq:main-idea}
\dim V = \max_{0\leq i \leq N} \dim \big(V\cap h^{-1}(t_i)\big) + 1,
\end{equation}
unless~$V$ is empty.
When~$V$ is compact, we choose~$h$ to be a linear form. This leads to an induction on the number of variables.
In the general case, we choose~$h$ to be the squared distance to a generic point. This reduces the general case to the compact case.

When~$s=1$,
the points~$t_i$ are chosen so that~$t_{i-1} < v_i < t_{i+1}$
where the points~$v_i \in \mathbb{R}$ are limits of critical values of the restriction of~$h$ to the set~$\left\{ f = \epsilon \right\}$ as~$\epsilon \to 0$.
In particular, we prove that $N \leq D^n$.
The general case is more involved but analogue.

At each recursive call, the cost of the algorithm
lies in:
\begin{enumerate}[(i)]
  \item\label{item:1} deciding~$V \neq \varnothing$, in order to apply~\eqref{eq:main-idea};
  \item\label{item:2} computing a univariate polynomial whose zero set contains~$\{ v_1,\dotsc,v_N \}$;
  \item isolating the zeros of this polynomial to find appropriate points~$t_i$.
\end{enumerate}
The steps~\ref{item:1} and~\ref{item:2} are performed in~$(sD)^{O(n)}$
arithmetic operations. The worst case complexity of the last step depends on the
height of the coefficients of the input equations; we will not enter into such
considerations. The total number of recursive calls is~$(sD)^{O(nd)}$.

Our results do not improve on the state-of-the-art asymptotic complexity bound
which is exponential in $d(n-d)$ while we only reach $nd$ (and not counting the cost of
isolating the real roots of some univariate polynomials).
% Note however that the constants in the exponent of \cite{BaSa15} are higher than ours.
However, the behavior of the algorithm depends more on the
\emph{real geometry} of $V$ than what is observed for algorithms based on
quantifier elimination. Nonetheless, the excellent practical behavior is easily explained.

For one part, the practical behavior of the algorithm is governed by the
actual degree of the algebraic varieties arising in the computation of the
limits of critical values. It is typically lower than the worst-case
bounds. For another part, the practical behavior is governed by the number
of recursive calls, which is~$(sD)^{O(nd)}$ where~$d$ is the dimension of~$V$.
With previous notations, if~$V$ is not empty, there are~$N+1$ direct recursive calls
(and each one may induce other indirect recursive calls naturally). While the polynomial
computed in step~\ref{item:2} has degree at most~$(6sD)^{n}$, the
actual degree is often lower than this bound. And the number~$N$ of \emph{real} roots is even lower
(see \S \ref{sec:experiments}). In our examples, the total number of recursive
calls ends up being dramatically lower than the worst-case bound. This accounts
for the good practical performance of our algorithm.

We report in detail on practical experiments which illustrate the interest of
our approach. Our algorithm is able to solve problems which are out of reach of
the state-of-the-art implementations. We give detailed information on its execution, in
particular the maximum number~$N$ observed at each level of the recursion.

\subsubsection*{Related works}
The study of properties of critical loci of generic quadratic or linear maps
that can be exploited for polynomial system solving is initiated in
\cite{BankGiustiHeintzMbakop_1997,BankGiustiHeintzPardo_2005} and followed by \cite{BankGiustiHeintzSafeyElDinSchost_2010}. Properness properties are
already exploited in algorithms of real algebraic geometry in
\cite{SafeyElDinSchost_2003}. The complexity analysis uses results
about the complexity of the geometric resolution algorithm \cite[and references therein]{GiustiLecerfSalvy_2001}.
The design of computer algebra algorithms whose practical behavior
depends more on the real geometry of the set under study is already exploited in
\cite{MS06} for answering connectivity queries (see also \cite{CapcoDinSchicho_2020} for
more recent developments).

\section{Geometry}\label{sec:geometry}

Let~$F = (f_1,\dotsc,f_s)$ be a polynomial map~$\mathbb{R}^n \to \mathbb{R}^s$.
In this whole section, we denote by $V\subset \R^n$ the real algebraic set
$F^{-1}(0)$.
Let~$h \in \mathbb{R}[x_1,\dotsc,x_n]$ be another polynomial.
For $t\in \R$, let $V(t) \eqdef V \cap h^{-1}(t)$.
We study the dimension of~$V$ in terms of the dimension of the fiber~$V(t)$.

\subsection{Dimension of fibers}

The dimension of~$V$ can be related to the dimension of the fibers~$V(t)$.
In the following statements, we use the convention~$\dim \varnothing = -1$.

\begin{proposition}\label{prop:dim-fiber-formula}
  Let~$Z \subset \mathbb{R}$ be a finite set such that~$t\mapsto
  \dim V(t)$ is locally constant on~$\mathbb{R}\setminus Z$.
  If\/~$V\neq \varnothing$, then
  \[ \dim V = \max \big( \max_{r\in Z} \dim V(r), \max_{t \in
  \mathbb{R}\setminus Z} \dim V(t) + 1 \big). \]
  %Moreover, if~$V\neq \varnothing$ and if~$h$ is a generic linear form or the squared distance to a generic point, then
  %\[  \dim V = \max_{t \in \mathbb{R}\setminus Z} \dim V(t) + 1. \]
\end{proposition}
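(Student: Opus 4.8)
The plan is to prove two inequalities. The direction $\dim V \geq \max(\max_{r\in Z}\dim V(r), \max_{t\notin Z}\dim V(t)+1)$ is the easy one: each fiber $V(r)$ is a subset of $V$, so $\dim V(r) \leq \dim V$; and for $t\notin Z$, the fiber $V(t)$ sits inside $V$ together with a transverse direction coming from varying $t$, so one expects $\dim V \geq \dim V(t)+1$. I would make the latter precise by noting that if $\dim V(t_0) = e$ for some $t_0 \notin Z$, then by local constancy $\dim V(t) = e$ for all $t$ in an open interval $I$ around $t_0$; the set $\bigcup_{t\in I} V(t) = V \cap h^{-1}(I)$ is a subset of $V$, and I would argue it has dimension at least $e+1$ — for instance by taking a point $x_0 \in V(t_0)$ near which $V(t_0)$ has local dimension $e$, invoking the local structure of semi-algebraic sets (a cell containing $x_0$ of dimension $e$ inside the fiber), and observing that sweeping this cell over $t\in I$ produces a semi-algebraic subset of $V$ of dimension $e+1$, since $h$ is nonconstant there. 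Alternatively one can use that $V\cap h^{-1}(I) \to I$ is a semi-algebraic map with all fibers of dimension $e$, whence the source has dimension $e+1$ by the additivity of dimension for semi-algebraic maps (Basu--Pollack--Roy).

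For the reverse inequality $\dim V \leq \max(\max_{r\in Z}\dim V(r), \max_{t\notin Z}\dim V(t)+1)$, I would decompose $V$ as $V = \big(\bigcup_{r\in Z} V(r)\big) \cup \big(V \setminus h^{-1}(Z)\big)$. Since $Z$ is finite, the first piece has dimension $\max_{r\in Z}\dim V(r)$. For the second piece $W \eqdef V\setminus h^{-1}(Z) = V \cap h^{-1}(\R\setminus Z)$, the restriction $h|_W : W \to \R\setminus Z$ is a semi-algebraic map whose fibers are exactly the $V(t)$ for $t\notin Z$, each of dimension at most $m \eqdef \max_{t\notin Z}\dim V(t)$. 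By the additivity of dimension along semi-algebraic maps, $\dim W \leq m + \dim(h(W)) \leq m+1$. Combining, $\dim V = \max(\dim \bigcup_r V(r), \dim W) \leq \max(\max_{r\in Z}\dim V(r), m+1)$, as desired. One has to be slightly careful about the case $V \neq \varnothing$ but $W = \varnothing$ (i.e. $V$ entirely contained in finitely many fibers), in which case $\max_{t\notin Z}\dim V(t) = -1$ and the formula still reads correctly since then $\dim V = \max_{r\in Z}\dim V(r)$; and conversely if some $V(t)$ with $t\notin Z$ is nonempty then by local constancy a whole interval of fibers is nonempty and the ``$+1$'' term dominates appropriately.

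The main obstacle, and the only real content, is the precise justification that a semi-algebraic set fibering over a semi-algebraic base with all fibers of dimension $\leq m$ has dimension $\leq m + \dim(\text{base})$, together with the matching lower bound when the fiber dimension is constant. This is a standard fact about semi-algebraic dimension, provable via a semi-algebraic trivialization or cell decomposition compatible with $h$, but it is where I would need to cite the appropriate lemma (e.g. from Bochnak--Coste--Roy or Basu--Pollack--Roy) rather than reprove it. Everything else — the finiteness of $Z$ making $\bigcup_{r\in Z}V(r)$ contribute only its max dimension, and the bookkeeping of the $\dim\varnothing = -1$ convention — is routine.
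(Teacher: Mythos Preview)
Your argument is correct and takes a somewhat different route from the paper's. The paper invokes Hardt's semialgebraic triviality theorem to obtain a finite set~$Z'$ over whose complement~$h$ is a locally trivial fibration; the product structure~$V\cap h^{-1}(U)\simeq U\times V(t)$ on each component~$U$ of~$\mathbb{R}\setminus Z'$ then gives the dimension formula with~$Z'$ in place of~$Z$, and a short reduction (using that~$\dim V(t)$ is locally constant on~$\mathbb{R}\setminus Z$) lets one shrink~$Z'$ down to~$Z$. You instead work directly with the given~$Z$ and appeal only to the additivity of dimension for semialgebraic maps: fibers of dimension~$\leq m$ over a one-dimensional base force total dimension~$\leq m+1$, and an interval's worth of fibers of dimension exactly~$e$ forces total dimension~$\geq e+1$. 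Your approach is more economical and avoids the two-step reduction, at the cost of citing the fiber-dimension formula (itself typically proved via cell decomposition or Hardt). The paper's route yields a bit more---an actual trivialization rather than just a dimension count---but that extra information is not used in this proposition.
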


\begin{proof}

  By Hardt's triviality
  theorem \parencite{Hardt_1980}, there is a finite set~$Z'\subset \mathbb{R}$ such that~$h$ induces a semialgebraic locally trivial fibration
  on~$\mathbb{R}\setminus Z'$: if~$U$ is a connected component
  of~$\mathbb{R}\setminus Z'$, then~$V \cap h^{-1}(U)$ is isomorphic, as a semialgebraic set,
  to~$U\times V(t)$, for any~$t\in U$.
  Since~$V$ is the
  finite union of all~$V(r)$, for~$r \in Z'$, and all~$V\cap h^{-1}(U)$, for all
  connected components~$U$ of~$\mathbb{R}\setminus Z'$, we obtain that
  \begin{equation*}
    \dim V = \max \big( \max_{r\in Z'} \delta(r), \max_{U} \dim (
    V\cap h^{-1}(U)) \big).
  \end{equation*}
  Since $ V\cap h^{-1}(U) \simeq U\times V(t)$ for any~$t\in U$,
  we have $\dim ( V\cap h^{-1}(U) ) = \delta(t) +1$, unless it is empty.
  When it is empty, then~$\delta(t)+1 = 0$ holds, and since~$\dim V \geq
  0$, it holds
  that
  \begin{equation}\label{eq:9}
    \dim V = \max \big( \max_{r\in Z'} \delta(r), \max_{t\in
    \mathbb{R}\setminus Z'} \delta(t) + 1 \big),
  \end{equation}
  which is the claim, albeit with~$Z'$ in place of $Z$.

  Without loss of generality, we may assume that~$Z \subseteq Z'$.
  We now show that we can remove points from~$Z'$ without breaking Equation~\eqref{eq:9} as long as $\dim V(t)$ is locally constant on~$\mathbb{R} \setminus Z$.
  If~$Z = Z'$, there is naturally nothing to prove.
  Assume that~$Z' = Z \cup \left\{ a \right\}$ for some~$a\not\in Z$ (and the general case follows by induction).
  Let~$\delta(t)$ denote~$\dim V(t)$.
  Since~$\delta$ is locally constant on~$\mathbb{R} \setminus Z$, and~$a\not\in Z$, $\delta$ is constant in a neighborhood of~$a$.
  So
  \begin{equation*}\label{eq:4}
    \max_{t\in \mathbb{R} \setminus Z'} \delta(t) + 1= \max_{t\in \mathbb{R} \setminus Z} \delta(t) + 1.
  \end{equation*}
  Combining with \eqref{eq:9}, this gives
    \begin{align*}
    \dim V &= \max \big( \max_{r\in Z} \delta(r), \delta(a), \max_{t\in \mathbb{R} \setminus Z} \delta(t) + 1 \big).
  \end{align*}
  To conclude, we observe that
  \begin{equation*}\label{eq:8}
    \delta(a) \leq \delta(a)+1 \leq \max_{t \in \mathbb{R}\setminus Z} \delta(t) + 1
  \end{equation*}
  because $a \in \mathbb{R}\setminus Z$.
\end{proof}

\subsection{Generic case}

Under a genericity assumption on~$h$, the special fibers above~$Z$ in Proposition~\ref{prop:dim-fiber-formula} carry no information.

Let $\mathcal{L}$ be the set of linear forms on~$\mathbb{R}^n$
and let~$\mathcal{Q}$ be the set of quadratic functions~$x\mapsto \|x-p\|^2$
for~$p\in \mathbb{R}^n$. Both are real algebraic varieties isomorphic to~$\mathbb{R}^n$.

\begin{proposition}\label{prop:dimfiber}
  Let $S\subset \R^n$ be a semialgebraic set of positive dimension.
  If~$h$ is a generic element of~$\mathcal{L}$ or~$\mathcal{Q}$, then for any~$t \in \mathbb{R}$,
  \[\dim \big(S\cap h^{-1}(t)\big) < \dim(S) .\]
\end{proposition}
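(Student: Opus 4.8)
The plan is to show that a generic $h$ in $\mathcal{L}$ or $\mathcal{Q}$ cannot be constant on any positive-dimensional semialgebraic subset of $S$, and moreover that the fiber drops dimension globally, not just at generic $t$. First I would decompose $S$ into finitely many smooth semialgebraic strata $S_1,\dotsc,S_m$ (a Whitney or just a semialgebraic stratification), so that it suffices to prove $\dim(S_j \cap h^{-1}(t)) < \dim S_j$ for each stratum of positive dimension: since $\dim S = \max_j \dim S_j$ and $S\cap h^{-1}(t) = \bigcup_j (S_j \cap h^{-1}(t))$, the claim for $S$ follows. For a zero-dimensional stratum there is nothing to do, and for a positive-dimensional stratum $S_j$ we may assume it is a connected smooth submanifold of dimension $e_j \geq 1$.

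The heart of the argument is the following dichotomy for a fixed smooth connected $e$-dimensional $M = S_j$ with $e\geq 1$: either the restriction $h|_M$ is nonconstant, in which case every fiber $M\cap h^{-1}(t)$ is a proper closed subset of $M$, hence of dimension $\leq e-1$ (a nonconstant semialgebraic function on a connected manifold has nowhere-dense fibers, which in the semialgebraic category forces a drop in dimension); or $h|_M$ is constant. So I need to exclude the second alternative for a generic $h$. For $\mathcal{L}$: the linear forms constant on $M$ are exactly those vanishing on the direction space $\mathrm{span}\{x-y : x,y\in M\}$, which is a proper subspace of $\R^n$ (a $\geq 1$-dimensional manifold is not contained in an affine hyperplane's worth of... rather: if $M$ has dimension $\geq 1$ it spans an affine subspace of dimension $\geq 1$, so the linear forms constant on it form a subspace of $\mathcal{L}\cong\R^n$ of codimension $\geq 1$), hence a measure-zero, lower-dimensional subvariety of $\mathcal{L}$. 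For $\mathcal{Q}$: $x\mapsto\|x-p\|^2$ is constant on $M$ iff $M$ is contained in a sphere centered at $p$; the set of such $p$ is the intersection over $x,y\in M$ of the bisector hyperplanes $\{p : \|x-p\|^2 = \|y-p\|^2\}$, again an affine subspace of $\R^n$ of codimension $\geq 1$ as soon as $M$ is positive-dimensional (it is not contained in any single hyperplane *and* a point — more carefully, if all these bisectors coincided with all of $\R^n$ then $M$ would be a point). Taking the finite union over the strata $S_j$ of these bad loci gives a proper algebraic (hence measure-zero) subset of $\mathcal{L}$ or $\mathcal{Q}$; any $h$ outside it works.

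I expect the main obstacle to be the uniformity in $t$: the genericity condition on $h$ must be chosen once and for all, independently of $t$, yet the conclusion must hold for \emph{every} $t\in\R$. This is exactly why I route through the nonconstancy of $h|_{S_j}$ rather than through a Sard-type argument about critical values — nonconstancy is a single open-dense condition on $h$ that automatically controls all fibers simultaneously, whereas a naive "generic fiber" statement would only handle $t$ outside a finite set depending on $h$. The second point requiring care is the semialgebraic fact that a proper closed semialgebraic subset of a connected $e$-dimensional semialgebraic manifold has dimension $< e$; this is standard (it follows from the local structure of semialgebraic sets, or from the fact that such a subset cannot contain a nonempty open subset of the manifold), and I would simply cite it. Finally one should double-check the degenerate case $e \geq 1$ but $M$ a single point cannot occur, and that the spanning/bisector subspaces are genuinely proper — both reduce to the elementary observation that a connected semialgebraic manifold of dimension $\geq 1$ contains at least two distinct points and in fact is not contained in any affine hyperplane unless... — here the cleanest phrasing is just: the affine span of $M$ has dimension $\geq 1$, so the constraints cutting out the bad locus are nontrivial.
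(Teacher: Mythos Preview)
Your approach is correct but genuinely different from the paper's. The paper passes to the Zariski closure~$W$ of~$S$ (which has the same dimension), decomposes~$W$ into irreducible components~$Y$, observes that for generic~$h$ one has~$h(p)\neq h(q)$ for two chosen points~$p,q\in Y$, and then invokes Krull's principal ideal theorem on the irreducible variety~$Y$ to conclude~$\dim(Y\cap h^{-1}(t))<\dim Y$. You instead stay in the semialgebraic category: stratify~$S$ into smooth connected pieces, argue that a generic~$h$ is nonconstant on each positive-dimensional stratum, and deduce the dimension drop from the identity principle for Nash (real-analytic) functions together with the fact that a semialgebraic subset of a manifold with empty interior has strictly smaller dimension. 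The paper's route is shorter because Krull's theorem packages the dimension drop in one line; your route avoids algebraic geometry but needs the Nash-manifold/analyticity ingredient to justify that fibers of~$h|_M$ are nowhere dense.

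One wording issue worth fixing: you write that ``a proper closed semialgebraic subset of a connected $e$-dimensional semialgebraic manifold has dimension~$<e$'', but this is false as stated (take~$[0,1]\subset\mathbb{R}$). What you actually need, and what you do invoke earlier, is that the fibers are \emph{nowhere dense} (equivalently, have empty interior) in~$M$; this is what forces the dimension drop, and it is where the analyticity of~$h|_M$ is genuinely used. Proper and closed alone are not enough.
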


\begin{proof}
  Let~$d = \dim S$.
  Let~$W$ be the Zariski closure of~$S$ in $\mathbb{R}^n$.
  Note that~$\dim W = \dim S$  \parencite[Proposition~2.8.2]{BochnakCosteRoy_1998}.
  It is enough to prove that for any irreducible component~$Y$ of~$W$,
  \begin{equation}\label{eq:5}
    \dim \big( Y\cap h^{-1}(t) \big) < d.
  \end{equation}
  Indeed, by the inclusion~$S\subseteq W$,
  and the formula for the dimension of a union \parencite[Proposition~2.8.5(i)]{BochnakCosteRoy_1998}, we have
  \[ \dim \big(S\cap h^{-1}(t)\big) \leq \max_Y \dim \big( Y\cap h^{-1}(t) \big). \]

  Let~$Y$ be an irreducible component of~$W$. We may assume that $\dim Y \geq 1$
  as \eqref{eq:5} is trivial otherwise (because~$d \geq 1$).
  In particular, $Y$ contains at least two points~$p$ and~$q$.
  When~$h\in \mathcal{L}$ (resp.~$\mathcal{Q}$) is generic, it is clear that~$h(p)\neq h(q)$.
  In particular, $h$ is not constant on~$Y$ and therefore~$h-t$ is a nonzero function on~$Y$ for any~$t\in \mathbb{R}$.
  By irreducibility of~$Y$ and Krull's principal ideal theorem, it follows that~$ \dim \big( Y\cap h^{-1}(t) \big) < \dim Y$,
  which gives~\eqref{eq:5} and concludes the proof.
\end{proof}

\begin{proposition}\label{prop:dim-fiber-formula-generic}
  Assume that~$h$ is a generic element in~$\mathcal{L}$ or~$\mathcal{Q}$.
  Let~$Z \subset \mathbb{R}$ be a finite set such that~$t\mapsto
  \dim V(t)$ is locally constant on~$\mathbb{R}\setminus Z$.
  If\/~$V\neq \varnothing$, then
  \[ \dim V = \max_{t \in \mathbb{R}\setminus Z} \dim V(t) + 1. \]
\end{proposition}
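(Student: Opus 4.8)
The plan is to read Proposition~\ref{prop:dim-fiber-formula-generic} off the two preceding results. First I would invoke Proposition~\ref{prop:dim-fiber-formula}, which, since $V\neq\varnothing$, already gives
\[ \dim V = \max\big(\max_{r\in Z}\dim V(r),\ \max_{t\in\mathbb{R}\setminus Z}\dim V(t)+1\big). \]
Thus it suffices to show that the fiber term $\max_{t\in\mathbb{R}\setminus Z}\dim V(t)+1$ attains the value $\dim V$; equivalently, that the exceptional term $\max_{r\in Z}\dim V(r)$ cannot strictly dominate it.

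Next I would split on whether $\dim V\geq 1$. In that case I apply Proposition~\ref{prop:dimfiber} with $S=V$: for $h$ generic in $\mathcal{L}$ or $\mathcal{Q}$ one has $\dim V(t)<\dim V$ for \emph{every} $t\in\mathbb{R}$, so in particular $\max_{r\in Z}\dim V(r)\leq \dim V-1<\dim V$. Feeding this into the displayed identity from Proposition~\ref{prop:dim-fiber-formula} forces the maximum to be realised by the fiber term, which is exactly the claim. I would also note that the genericity condition on $h$ is the one already isolated in the proof of Proposition~\ref{prop:dimfiber} — it depends only on the finitely many positive-dimensional irreducible components of the Zariski closure of $V$, hence cuts out a nonempty (Zariski-open in $\mathcal{L}$ or $\mathcal{Q}$) set of admissible $h$.

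The remaining case $\dim V=0$ is the one small obstacle, because there Proposition~\ref{prop:dimfiber} has nothing to say (genericity of $h$ buys nothing for a finite set). Here $V$ is a nonempty finite set, so $h(V)$ is finite and every nonempty fiber $V(t)$ is itself finite, of dimension $0$. I would argue that $h(V)\subseteq Z$: if some $t_0\in h(V)$ lay outside $Z$, then $\dim V(t_0)=0$, while arbitrarily close to $t_0$ there are values $t\notin h(V)$ (since $t_0$ is isolated in the finite set $h(V)$) with $\dim V(t)=-1$, contradicting the local constancy of $t\mapsto\dim V(t)$ on $\mathbb{R}\setminus Z$. Hence $V(t)=\varnothing$ for all $t\notin Z$, so $\max_{t\in\mathbb{R}\setminus Z}\dim V(t)+1=-1+1=0=\dim V$, and Proposition~\ref{prop:dim-fiber-formula} again gives the conclusion. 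Everything beyond this degenerate case is just bookkeeping with the union formula, which has already been done inside Proposition~\ref{prop:dim-fiber-formula}.
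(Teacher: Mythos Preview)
Your argument is correct and follows exactly the paper's approach: invoke Proposition~\ref{prop:dimfiber} to bound the exceptional fibers strictly below~$\dim V$, then read off the conclusion from Proposition~\ref{prop:dim-fiber-formula}. The paper's proof is the two-line version of your main case and does not separately treat~$\dim V=0$; your added paragraph for that degenerate case is a genuine (and correct) completion, since Proposition~\ref{prop:dimfiber} indeed requires positive dimension and the strict inequality~$\dim V > \max_{r\in Z}\dim V(r)$ can fail when~$\dim V=0$.
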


\begin{proof}
  Proposition~\ref{prop:dimfiber} implies that
  $\dim V > \max_{r \in Z} \dim V(r)$.
  We conclude with Proposition~\ref{prop:dim-fiber-formula}.
\end{proof}

\subsection{Limits of critical values}
\label{sec:limits-crit-valu}

The main obstacle in applying Propositions~\ref{prop:dim-fiber-formula}
or~\ref{prop:dim-fiber-formula-generic} is the computation of a suitable finite
set~$Z \subset \mathbb{R}$ such that the dimension of the fiber~$V(t)$ is
locally constant outside~$Z$. The proof of
Proposition~\ref{prop:dim-fiber-formula} indicates that an effective Hardt
trivialization would be enough. Actually, much less is required. We show now how
to compute such a finite set $Z$ by means of critical points computations.

We start by introducing some notation. For $e \in \R^s$, let $S_e$ be the
semi-algebraic set
\begin{equation*}
  S_{e} \eqdef \left\{ p\in \mathbb{R}^n \st \abs{f_i(p)} \leq e_i \text{ for } 1\leq i \leq s \right\}.\label{eq:2}
\end{equation*}
For $t\in \R$, let~$S_e(t) \eqdef S_e \cap h^{-1}(t)$.

We will see that the dimension of~$V(t)$ is entirely determined by the homotopy
type of~$S_e(t)$, for a generic and small enough~$e\in \mathbb{R}^s$. Moreover,
the variations of the homotopy types of~$S_e(t)$ are controlled by Thom's
isotopy lemma when $t$ varies in terms of the critical values of~$h$ on a
Whitney stratification of~$S_e(t)$. Such a stratification is easy to get
when~$e$ is generic.

In all this section, we assume that for~$e$ in a neighborhood of~$0$, the
restriction of~$h$ to~$S_e$ is proper. In particular, this is the case when~$F$
is proper on~$\mathbb{R}^n$ (in which case~$S_e$ is compact), or when~$h$ is
proper on~$\mathbb{R}^n$.

Let~$e\in \mathbb{R}^s$ and let~$H_e$ be the
hypercube~$[-e_1,e_1]\times \dotsb\times [-e_s, e_s]$ (if~$e_i < 0$, then~$[-e_i, e_i]$ means~$[e_i, -e_i]$). The relative interior of
the $3^s$ facets of~$H_e$ form a Whitney stratification, which we
denote~$\whitney{{H}_e}$; namely
\[ \whitney{{H}_e} \eqdef \big\{ I_1\times \dotsb \times I_s \ \big|\  I_i \in \left\{ \left\{ -e_i \right\}, \left\rbrack -e_i, e_i \right\lbrack, \left\{ e_i \right\} \right\} \big\}. \]
When~$e$ is generic, the
pullback~$\whitney{{S}_e} \eqdef F^{-1}(\whitney{{H}_e})$ induces a Whitney stratification
of~$S_e$.
For this, it is enough to check that~$F$ is \emph{transverse} to each strata of~$\whitney{{H}_e}$ \parencite[(1.4)]{GibsonWirthmullerPlessisLooijenga_1976}.
The map~$F$ is transverse to a submanifold~$A \subseteq \mathbb{R}^s$ if
for any~$x \in F^{-1}(A)$,
\begin{equation}\label{eq:1}
  \ud_x F (\mathbb{R}^n) + T_{f(x)} A = \mathbb{R}^s.
\end{equation}
Let~$\mathcal{E} \subseteq \mathbb{R}^s$ be the set of all~$e$ such that
$\whitney{{S}_e}$ is a Whitney stratification of~$S_e$. The set $\mathcal{E}$ contains
a dense Zariski open set in~$\mathbb{R}^s$ \cite[General position
lemma]{Canny_1988}. In particular, if $e \in \mathcal{E}$, then~$se \in
\mathcal{E}$ for all but finitely many~$s \in \mathbb{R}$.

For~$e \in \mathbb{R}^s$, let~$\Sigma_{e}$ be
\[ \Sigma_e \eqdef \cup_{A\in \whitney{{S}_e}} h \big( \crit(h, A^\text{reg}) \big), \]
where~$A^\text{reg}$ denotes the regular locus of~$A$ (the points of~$A$ where the derivatives of the equations defining~$A$ are linearly independent).
When~$\whitney{S_e}$ is a Whitney stratification, then, by definition, the strata as regular and~$A = A^\text{reg}$,
so that
\[ \Sigma_e \eqdef \cup_{A\in \whitney{{S}_e}} h \big( \crit(h, A) \big). \]
By Sard's theorem, $\Sigma_e$ is finite.
We may consider the limit~$\lim_{r\to 0} \Sigma_{re}$. It is the finite set
\begin{equation}\label{eq:def-limit}
  \lim_{r\to 0} \Sigma_{re} = \cap_{R > 0} \overline{\cup_{0 < r < R} \Sigma_{re}},
\end{equation}
where the bar denotes closure with respect to Euclidean topology. This core
construction will be heavily used in the algorithms we describe in the next
section.

% When~$s = 1$, that is~$F$ is a polynomial, the situation is significantly simpler.
% If~$F$ has both positive and negative values, $\dim V = n-1$.
% So for the problem of computing the dimension, we may assume that~$F$ is a nonnegative polynomial.
% The Whitney stratification~$\mathcal{S}_e$ is simply formed by the sets~$\left\{ F < e \right\}$ and~$\left\{ F = e \right\}$.
% Moreover~$\left\{ F < e \right\}$, which is~$\mathring S_e$, the interior of~$S_e$, is open, so~$\crit(h, \mathring S_e) = \crit(h, \R^n) \cap \mathring S_e.$
% So we have
% \[ \Sigma_e = h \big(\crit(h, \mathbb{R}^n)  \cap \mathring S_e \big) \cup h(\crit(h, \partial S_e)), \]
% and the limit is
% \[ \lim_{r\to 0} \Sigma_{re} = h\big(\crit(h, \R^n) \cap V \big) \cup \lim_{r\to 0} h(\crit(h, \partial S_{re})). \]
% \cpl{Probablement pas à garder en l'état mais le cas $s=1$ est important.}

\begin{lemma}\label{lem:dimension-isotopy}
  Two real compact algebraic sets that are homotopically equivalent
  have the same dimension.
\end{lemma}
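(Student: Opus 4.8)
The plan is to show that for a nonempty real compact algebraic set $Z$ the integer
\[
  \beta(Z) \;=\; \sup\{\, k \geq 0 : H_k(Z;\mathbb{Z}/2) \neq 0 \,\}
\]
equals $\dim Z$; with the conventions $\sup \varnothing = -1$ and $\dim \varnothing = -1$ this identity also holds for $Z = \varnothing$. Mod $2$ singular homology is a homotopy invariant, hence so is $\beta$, and the lemma then follows at once: if $X$ and $Y$ are homotopically equivalent real compact algebraic sets, then $\dim X = \beta(X) = \beta(Y) = \dim Y$. It thus suffices to prove $\beta(Z) = \dim Z$ for a single nonempty real compact algebraic set $Z$, and we set $d = \dim Z \geq 0$.

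The inequality $\beta(Z) \leq d$ is soft: by the semialgebraic triangulation theorem \parencite[Thm.~9.2.1]{BochnakCosteRoy_1998}, $Z$ is homeomorphic to a finite simplicial complex $K$ with $\dim K = d$, and a simplicial complex carries no homology above its dimension, so $H_k(Z;\mathbb{Z}/2) = 0$ for all $k > d$.

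For the reverse inequality $\beta(Z) \geq d$ I would use that $Z$ carries a nonzero fundamental class in $H_d(Z;\mathbb{Z}/2)$. This is the one point where algebraicity, and not merely semialgebraicity, is essential --- the closed disk is a compact semialgebraic set of dimension $2$ with $H_2 = 0$, so a hypothesis of this kind is unavoidable. Concretely, keeping the triangulation $K$ of the previous paragraph, a classical theorem of Sullivan ensures that in a real algebraic set the link of every point has even Euler characteristic; applying this to a point in the relative interior of a $(d-1)$-simplex $\tau$ of $K$ --- whose link is the join of $S^{d-2}$ with the finite set of $d$-simplices containing $\tau$ --- a short Euler-characteristic computation shows that $\tau$ is a face of an even number of $d$-simplices. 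Consequently the $\mathbb{Z}/2$-chain $c$ equal to the sum of all $d$-dimensional simplices of $K$ is a cycle (each $(d-1)$-face appears with even multiplicity in $\partial c$), it is not a boundary (there are no $(d+1)$-simplices), and it is nonzero (since $\dim K = d$ there is at least one $d$-simplex); hence $H_d(Z;\mathbb{Z}/2) \neq 0$. Equivalently, one may directly invoke the existence and nontriviality of the Borel--Haefliger mod $2$ fundamental class of a real algebraic set.

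The only substantial ingredient is this last step, which relies on a genuine theorem about real algebraic sets (Sullivan's parity condition, equivalently the mod $2$ fundamental class); I expect that pinning down the right statement and reference for it, rather than any new argument, will be the bulk of the work, the remainder being formal.
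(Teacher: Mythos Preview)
Your proof is correct and follows essentially the same route as the paper: both characterize the dimension of a compact real algebraic set~$Z$ as the top degree in which $H_\bullet(Z;\mathbb{Z}/2)$ is nonzero, obtain the upper bound from the semialgebraic triangulation theorem \parencite[Thm.~9.2.1]{BochnakCosteRoy_1998}, and obtain the lower bound from the mod~$2$ fundamental class (the sum of the top-dimensional simplices). The only difference is that where you sketch the cycle property via Sullivan's link-parity theorem, the paper simply cites \parencite[Proposition~11.3.1]{BochnakCosteRoy_1998}, which is exactly the reference you were anticipating would be ``the bulk of the work'' to locate.
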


\begin{proof}
  Since the homology is invariant under homotopy equivalence \parencite[Corollary~2.11]{Hatcher_2002}, it is enough to
  give a homological characterization of dimension.

  Let~$W$ be a compact real algebraic set. We claim that the dimension
  of~$W$ is the
  largest integer~$d$ such that~$H_d(W; \mathbb{Z}_2)$ is not zero
  (where~$\mathbb{Z}_2 = \mathbb{Z}/2\mathbb{Z}$). Indeed, $W$ is homeomorphic
  to a $d$-dimensional simplicial complex
  \parencite[Theorem~9.2.1]{BochnakCosteRoy_1998}. In particular,
  $H_{k}(W; \mathbb{Z}_2) = 0$ for~$k > d$. Moreover, the sum of the
  $d$-dimensional simplices of a given simplicial decomposition of~$W$ is a
  nonzero element in~$H_d(W;\mathbb{Z}_2)$
  \parencite[Proposition~11.3.1]{BochnakCosteRoy_1998}, and then
  $H_d(W;\mathbb{Z}_2)\neq 0$.
\end{proof}

The next result establishes that we can take $Z = \lim_{r
  \to 0}\Sigma_{re}$.

\begin{theorem}\label{thm:dimension-locally-constant}
  Let~$e \in \mathcal{E}$ such that the restriction of~$h$ to~$S_e$ is
  proper. The dimension of\/~$V(t)$ is locally constant
  on~$\mathbb{R} \setminus \lim_{r \to0}\Sigma_{re}$.
\end{theorem}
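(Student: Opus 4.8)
The plan is to pass from the algebraic fiber $V(t)$ to the semi-algebraic fiber $S_{re}(t)$ for small generic $r$, prove that the homotopy type of $S_{re}(t)$ is locally constant on $\mathbb{R}\setminus\lim_{r\to 0}\Sigma_{re}$ via Thom's first isotopy lemma, and then recover the dimension of $V(t)$ from that homotopy type using Lemma~\ref{lem:dimension-isotopy}. Concretely, fix a point $t_0\in\mathbb{R}\setminus\lim_{r\to 0}\Sigma_{re}$; I want to exhibit an open interval $I\ni t_0$ on which $t\mapsto\dim V(t)$ is constant.

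First I would establish the link between $V(t)$ and $S_{re}(t)$. Since $S_{re}$ is a decreasing family of compact sets (using that $h$ restricted to $S_e$ is proper, so $S_{re}(t)$ is compact) whose intersection over $r>0$ is $V$, a standard compactness/retraction argument — or Hardt triviality applied to the map $(x,r)\mapsto(h(x),r)$ on $\{(x,r): x\in S_{re}, r\geq 0\}$ — gives that for $r$ small enough $S_{re}(t)$ deformation retracts onto $V(t)$, hence is homotopy equivalent to it; and one can choose this $r$ uniformly for $t$ in a compact interval. Then $\dim V(t)=\dim S_{re}(t)$ for all $t$ in that interval by Lemma~\ref{lem:dimension-isotopy}, because both are compact (real algebraic, resp.\ compact semi-algebraic — here I would note that the dimension statement of Lemma~\ref{lem:dimension-isotopy} in fact only uses the $\mathbb{Z}_2$-homology characterization, which is equally valid for compact semi-algebraic sets via their simplicial structure, so the homotopy equivalence suffices).

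Next I would run the isotopy argument. For $e\in\mathcal{E}$, $\whitney{S_e}$ is a Whitney stratification of $S_e$, and since $h$ is proper on $S_e$ (hence on each stratum, which is closed in $S_e$ after taking closures appropriately), Thom's first isotopy lemma applies to the proper stratified map $h\colon S_e\to\mathbb{R}$: over any connected open subset of $\mathbb{R}\setminus\Sigma_e$, $h$ is a locally trivial stratified fibration, so all fibers $S_e(t)$ are homeomorphic, in particular of constant dimension. Now I need this for $e$ replaced by $re$ with $r\to 0$: for all but finitely many $r>0$ we have $re\in\mathcal{E}$, and $\Sigma_{re}\to\lim_{r\to 0}\Sigma_{re}$ in the sense of~\eqref{eq:def-limit}. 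So given $t_0\notin\lim_{r\to 0}\Sigma_{re}$, pick $R$ small enough that $\overline{\cup_{0<r<R}\Sigma_{re}}$ misses some interval $I\ni t_0$; shrinking $R$ further so that $S_{re}(t)\simeq V(t)$ for $t\in\bar I$, and choosing one admissible $r<R$, we get $\dim V(t)=\dim S_{re}(t)$ constant on $I$ by the isotopy lemma.

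The main obstacle is the first step: making precise and correct the claim that $S_{re}(t)$ retracts onto $V(t)$ for small $r$, uniformly in $t$ over a compact interval, and checking the properness hypotheses needed both there and in the application of Thom's isotopy lemma to the noncompact strata of $S_e$ when $F$ is not itself proper (only $h|_{S_e}$ is assumed proper). The rest — the isotopy lemma and the homological dimension argument — is essentially bookkeeping once the stratification $\whitney{S_e}$ from the previous subsection and Lemma~\ref{lem:dimension-isotopy} are in hand.
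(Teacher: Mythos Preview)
Your outline matches the paper's proof almost exactly: pick one small $r$ with $re\in\mathcal{E}$ and $\Sigma_{re}$ disjoint from a closed interval $U$, apply Thom's first isotopy lemma to the proper stratified map $h|_{S_{re}}$ to get $S_{re}(t)\simeq S_{re}(s)$ for all $t,s\in U$, and conclude via Lemma~\ref{lem:dimension-isotopy}. The step you flag as the ``main obstacle'' is dispatched in the paper by a direct citation to Durfee (1983, Proposition~1.6), which states precisely that the inclusion $V(t)\hookrightarrow S_{re}(t)$ is a homotopy equivalence; chaining $V(t)\simeq S_{re}(t)\simeq S_{re}(s)\simeq V(s)$ then lets one apply Lemma~\ref{lem:dimension-isotopy} directly to the compact \emph{algebraic} fibers, so your detour through $\dim S_{re}(t)$ and the semi-algebraic extension of the lemma is unnecessary.
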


\begin{proof}
  Let~$U \subseteq \mathbb{R}$ be a closed interval such that~$U \cap
  \lim_{r\to0}\Sigma_{re}$ is empty.
  There exists~$0< r < 1$ such
  that~$re \in \mathcal{E}$ and $\Sigma_{re}\cap U = \varnothing$.
  For notations, we may replace~$e$ by~$re$ and assume that~$r=1$.

  The sets~$A \cap h^{-1}(U)$, for~$A \in \whitney{{S}_{e}}$, form a
  Whitney stratification of~$S_e \cap h^{-1}(U)$.
  By construction, $\crit(h, A\cap h^{-1}(U))$ is empty.
  Indeed~$e\in \mathcal{E}$, so $\whitney{S_e}$ is a Whitney stratification and~$A = A^\text{reg}$. Moreover,
  \begin{align*}
  \crit\big(h, A^\text{reg} \cap h^{-1}(U)\big) &= \crit(h, A^\text{reg}) \cap h^{-1}(U) \\& \subseteq h^{-1}\big( \Sigma_{e} \cap U \big) = \varnothing.
  \end{align*}
  So the restriction of~$h$ to each~$A \cap h^{-1}(U)$
  is a submersion.
  Besides, the restriction of~$h$ to~$S_{e}$ is proper, by hypothesis.
  Therefore, by Thom's first isotopy lemma \parencite[Proposition~11.1]{Mather_2012},
  $h$ induces a locally trivial fibration over~$U$.
  Since~$U$ is an interval, it is simply connected and the fibration is trivial.
  If we choose a base point~$b\in U$, this means
  that there is a diffeomorphism~$\phi :  S_e\cap h^{-1}(U) \to U\times
  S_e(b)$
  such that~$\text{proj}_1 \circ \phi = h$.
  It follows easily that all~$S_e(t)$, for~$t \in U$ are diffeomorphic and,
  in particular, homotopically equivalent.

  Next, for any~$t\in U$, the inclusion~$V(t) \to S_e(t)$ is a
  homotopy equivalence \parencite[Proposition~1.6]{Durfee_1983}.
  So for any~$t,s\in U$, we have the homotopy equivalence
  \[ V(t) \mathop{\simeq}_{\text{Durfee}} S_e(t)  \mathop{\simeq}_{\text{Thom}}  S_e(s) \ \mathop{\simeq}_{\text{Durfee}} V(s). \]
  So the homotopy type of~$V(t)$ is constant for~$t\in U$.
  By Lemma~\ref{lem:dimension-isotopy}, the dimension of~$V(t)$ is constant
  for~$t\in U$.
\end{proof}

\section{Algorithms}\label{sec:algo}

An algorithm for computing the dimension mostly follows from
Proposition~\ref{prop:dim-fiber-formula-generic} and
Theorem~\ref{thm:dimension-locally-constant}. We first explain how to check a
sufficient condition for~$e$ to be in~$\mathcal{E}$.

\subsection{Genericity of the stratification}
\label{sec:check-gener-strat}

\begin{algo}[tp]
  \begin{description}
    \item[Input] $f_1,\dotsc,f_s \in \mathbb{R}[x_1,\dotsc,x_n]$, $e\in \mathbb{R}^s$
    \item[Output] \textsc{true} or \textsc{false}
    \item[Postcondition] If \textsc{true} is returned, then~$\whitney{{S}_e}$ is a Whitney stratification of~$S_e$.
          If~$e$ is generic, then returns \textsc{true}.
    %\item[Complexity] $D^{O(n)} (2s)^n$
  \end{description}

  \begin{algorithmic}[1]
    \Function{CheckWhitneyStratification}{$f_1,\dotsc,f_s$, $e$}
    \For{$I \subseteq \left\{ 1,\dotsc,s \right\}$ with~$\#I \leq n+1$ and $\sigma : I \to \left\{ \pm 1 \right\}$}
    \State $M \gets \left( \partial f_i / \partial x_j \right)_{i \in I, j\in \left\{ 1,\dotsc,n \right\}}$, a $\#I\times n$ matrix
    %\State \Comment $M$ is $\#I \times n$ matrix with polynomial entries.
    \State $J \gets \langle f_i - \sigma_i e_i, i\in I\rangle + \langle \#I \times \#I \text{ minors of } M \rangle$
    \If{$J \neq \langle 1 \rangle$}
    \State \Return \textsc{false}
    \EndIf
    \EndFor
    \State \Return \textsc{true}
    \EndFunction
  \end{algorithmic}

  \caption{Checking the genericity of the stratification}
  \label{algo:check-whitney}
\end{algo}

We use the notations of \S\ref{sec:limits-crit-valu}. To check
that~$\whitney{{S}_e}$ is a Whitney stratification of~$S_e$, it is enough to check
Condition~\eqref{eq:1} for any strata~$A$ of~$\whitney{{H}_e}$. In order to reduce to
easier algebraic computations, we use instead the Zariski closure of the strata
and check the condition over~$\mathbb{C}$. This gives a stronger set of
conditions, that are are still sufficient conditions for~$\whitney{{S}_e}$ to be a
Whitney stratification.

A stratum~$A$ of~$\whitney{{H}_e}$ is determined by a subset~$I \subseteq \left\{ 1,\dotsc,s \right\}$
and a map~$\sigma : I \to \left\{ \pm 1 \right\}$.
$I$ and~$\sigma$ define a linear variety~$L_{I, \sigma} \eqdef \left\{ y\in\mathbb{C}^s \st y_i = \sigma_i e_i, i\in I \right\}$
which is the Zariski closure of the corresponding strata of~$\whitney{{H}_e}$, namely~$L_{I,\sigma} \cap H_e$.
The tangent space of~$L_{I,\sigma}$ at any~$y\in L_{I,\sigma}$ is $\left\{ y\in \mathbb{C}^s \st y_i = 0, i\in I \right\}$.
Therefore, if~$p_I$ is the linear projection~$\mathbb{R}^n\to \mathbb{R}^I$ retaining only coefficients whose indices are in~$I$,
Condition~\eqref{eq:1} is equivalent to~$p_I \circ \ud_x F : \mathbb{R}^n\to \mathbb{R}^I$ being surjective.
Therefore, if the variety
\begin{align*}
  Y_{I,\sigma}&\eqdef \left\{ x \in \mathbb{C}^n \st F(x) \in L_{I,\sigma} \text{ and } p_I \circ \ud_x F \text{ not surjective} \right\}.
\end{align*}
is empty, then Condition~\eqref{eq:1} holds for the strata defined by~$I$ and~$\sigma$.
The equations defining $Y_{I,\sigma}$ are~$f_i = \sigma_i e_i$, for~$i\in I$, and the vanishing of the $\#I\times\#I$ minors of~$p_I \circ \ud_x F$ (which is identified with a~$\#I \times n$ matrix with polynomial coefficients).

There are~$3^s$ possible values for~$I$ and~$\sigma$. However, when~$s > n+1$,
not all~$Y_{I,\sigma}$ need to be checked for emptyness. Indeed, when~$\#I > n$,
the map~$p_I \circ \ud_x F$ is never surjective (and the corresponding matrix
has no~$\#I\times\#I$ minors). In particular, $Y_{I,\sigma}$
contains~$Y_{I',\sigma'}$ for any~$I'$ and~$\sigma'$ such that~$I\subseteq I'$
and~$\sigma = \sigma'|_I$. Therefore, to check
that~$Y_{I, \sigma} = \varnothing$ for all~$I$ and~$\sigma$, it is enough to
check it for all~$I$ with~$\#I \leq n+1$. This leads to
Algorithm~\ref{algo:check-whitney} and the following statement.

\begin{proposition}
  On input~$f_1,\dotsc,f_s \in \mathbb{R}[x_1,\dotsc,x_n]$ and~$e \in \mathbb{R}^s$,
  Algorithm~\ref{algo:check-whitney} returns \textsc{true} if~$e$ is generic.
  If it returns \textsc{true}, then~$\whitney{{S}_e}$ is a Whitney stratification of~$S_e$.
\end{proposition}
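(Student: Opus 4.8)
The plan is to verify the two halves of the postcondition separately, following the discussion that precedes the statement. Throughout, I use the notation already set up: a stratum of $\whitney{{H}_e}$ is indexed by a pair $(I,\sigma)$ with $I \subseteq \{1,\dotsc,s\}$ and $\sigma : I \to \{\pm 1\}$, the linear variety $L_{I,\sigma} \subseteq \mathbb{C}^s$ is the Zariski closure of that stratum, and $Y_{I,\sigma}$ is the variety of points $x$ with $F(x) \in L_{I,\sigma}$ at which $p_I \circ \ud_x F$ fails to be surjective.

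\emph{Soundness (if \textsc{true} is returned, then $\whitney{{S}_e}$ is a Whitney stratification of $S_e$).} I would argue as follows. By the quoted criterion \parencite[(1.4)]{GibsonWirthmullerPlessisLooijenga_1976}, it suffices to check that $F$ is transverse to each stratum $A$ of $\whitney{{H}_e}$. Fix $A$, corresponding to $(I,\sigma)$. For any $x \in F^{-1}(A)$ we have $F(x) \in A \subseteq L_{I,\sigma}$, and $T_{F(x)} A$ is the tangent space of $L_{I,\sigma}$, namely $\{ y \st y_i = 0, i \in I\}$; as observed before the statement, the transversality condition~\eqref{eq:1} at $x$ is then equivalent to $p_I \circ \ud_x F$ being surjective. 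So $F$ fails to be transverse to $A$ at some real point exactly when there is a real point of $Y_{I,\sigma}$. Now suppose the algorithm returned \textsc{true}. Then for every $I$ with $\#I \le n+1$ and every $\sigma$, the ideal $J$ computed in the loop is $\langle 1 \rangle$; since $J$ is exactly a set of defining equations for $Y_{I,\sigma}$ (the equations $f_i - \sigma_i e_i$ for $i \in I$ together with the $\#I \times \#I$ minors of $M = p_I \circ \ud_x F$), the Nullstellensatz gives $Y_{I,\sigma}(\mathbb{C}) = \varnothing$, hence \emph{a fortiori} $Y_{I,\sigma}$ has no real points. For $I$ with $\#I > n+1$ I invoke the containment noted in the text: truncating $I$ to a subset $I_0 \subseteq I$ of size $n+1$ with $\sigma_0 = \sigma|_{I_0}$, one has $Y_{I,\sigma} \subseteq Y_{I_0,\sigma_0}$ (because $\#I_0 > n$ already forces $p_{I_0}\circ\ud_x F$ to be non-surjective, so every $x$ with $F(x)\in L_{I,\sigma}$ lies in $Y_{I_0,\sigma_0}$), and the latter is empty. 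Wait --- I should be slightly careful here: the containment actually requires $\#I_0 > n$, i.e.\ $\#I_0 \ge n+1$, which is why the loop bound is $\#I \le n+1$ and not $\#I \le n$; taking $\#I_0 = n+1$ works. Thus $F$ is transverse to every stratum, and $\whitney{{S}_e}$ is a Whitney stratification, i.e.\ $e \in \mathcal{E}$.

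\emph{Completeness (if $e$ is generic, the algorithm returns \textsc{true}).} Here I would show that for generic $e$ the ideal test never fails. The cleanest route is to exhibit, for each fixed $(I,\sigma)$, a Zariski-dense open set of parameters $e$ for which $Y_{I,\sigma}(\mathbb{C})$ is empty, and then intersect over the finitely many pairs $(I,\sigma)$ with $\#I \le n+1$. Concretely, consider the incidence variety $\{(x,e) \st f_i(x) = \sigma_i e_i \text{ for } i\in I,\ p_I\circ\ud_x F \text{ not surjective}\}$ and project to the $e$-space: the fiber over $e$ is $Y_{I,\sigma}$, and a standard dimension/Bertini–Sard argument (this is precisely the content of the ``General position lemma'' of \cite{Canny_1988} cited earlier, which is why $\mathcal{E}$ contains a dense Zariski open set) shows the projection misses a dense open set of $e$'s. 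For $e$ in the intersection of these finitely many dense opens, each $J = \langle 1\rangle$ by the Nullstellensatz, so every iteration of the loop passes and the algorithm returns \textsc{true}. Since a dense Zariski open condition is implied by ``$e$ generic,'' this is exactly the claim.

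The main obstacle, and the only place requiring genuine care rather than bookkeeping, is the completeness direction: one must be sure that passing to the Zariski closures of the strata and testing emptiness \emph{over $\mathbb{C}$} does not make the condition strictly harder in a way that fails for generic real $e$. In other words, one needs that the complex variety $Y_{I,\sigma}$ is empty for generic $e$, not merely that it has no real points; this is what \cite{Canny_1988} delivers, and I would simply cite it rather than reprove it. Everything else --- the reduction to the transversality criterion, the translation of~\eqref{eq:1} into surjectivity of $p_I \circ \ud_x F$, the minors-ideal encoding, and the $\#I \le n+1$ truncation --- is already spelled out in the paragraphs preceding the statement, so the proof is mostly a matter of assembling those observations.
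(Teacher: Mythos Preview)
Your proposal is correct and follows exactly the approach the paper intends: the paper gives no separate proof of this proposition, relying instead on the preceding discussion (the transversality criterion, the reformulation of~\eqref{eq:1} as surjectivity of $p_I\circ\ud_x F$, the minors encoding, the $\#I\le n+1$ truncation, and the Canny citation), and you have assembled precisely those ingredients into a proof. The only place you go slightly beyond the text is in making explicit that the Canny-type argument yields emptiness of $Y_{I,\sigma}$ over $\mathbb{C}$ (not merely over $\mathbb{R}$), which is indeed what the algorithm's test $J=\langle 1\rangle$ requires; this is correct and is the right thing to flag.
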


\subsection{Computing the limits of critical values}

\begin{algo}[tp]
  \raggedright
  \begin{description}
    \item[Input] $f_1,\dotsc,f_s$ and~$h \in \mathbb{R}[x_1,\dotsc,x_n]$ and~$e \in (\mathbb{R} \setminus \left\{ 0 \right\})^s$
    \item[Precondition] The restriction of~$h$ to~$S_e = \left\{ |f_i| \leq \abs{e_i} \right\}$ is proper.
    \item[Output] A finite set~$Z \subset \mathbb{R}$ description as the zero set of a univariate polynomial
    \item[Postcondition] $\lim_{r\to 0} \cup_{A \in \whitney{{S}_{re}}} h \left( \crit(h, A^\text{reg}) \right) \subseteq Z$
  \end{description}

  \begin{algorithmic}[1]
    \Function{LimitCriticalValues}{$f_1,\dotsc,f_s$, $h$, $e$}
    \State $Z \gets \varnothing$
    \For{$I \subseteq \left\{ 1,\dotsc,s \right\}$ with~$\#I \leq n$ and $\sigma : I \to \left\{ \pm 1 \right\}$}
    \State $J' \gets \langle \sum_{i\in I} \lambda_i \partial_j f_i - \partial_j h \rangle_{j\in [n]} + \langle \sigma_j e_j f_i - \sigma_i e_i f_j\rangle_{i,j\in I}$
    \State\Comment The~$\lambda_i$ are new variables. This is the ideal of~$W'_{I, \sigma, e}$.
    \State $J \gets J' \cap \mathbb{R}[x_1,\dotsc,x_n] + \langle f_1,\dotsc,f_s \rangle$
    \State \Comment Ideal of~$V\cap \overline{W_{I,\sigma,e}}$
    \State $ p\gets$ a generator of~$\left( J + \langle h-t\rangle \right) \cap \mathbb{R}[t]$
    \State $Z \gets Z \cup p^{-1}(0)$
    \EndFor
    \State \Return $Z$
    \EndFunction
  \end{algorithmic}

  \caption{Limits of critical values}
  \label{algo:limits-critical-values}
\end{algo}

\subsubsection{Description}

We consider the problem of computing the limit set~$\lim_{r\to 0} \Sigma_{re}$, for a given~$e \in (\mathbb{R} \setminus \{0\})^s$,
using the notations of~\S \ref{sec:limits-crit-valu}.
To this purpose, we extend to the case of several equations the method of \parencite{SafeyElDin_2005}, and remove the genericity assumptions. Most of the technical difficulties come from considering several equations.

Let~$P_e$ denote the critical points of~$h$ on the strata of~$\whitney{{S}_e}$, that is
$P_e \eqdef \cup_{A\in \whitney{{S}_e}} \crit(h, A^\text{reg})$,
so that~$\Sigma_e = h(P_e)$.
The limit set~$\lim_{r\to 0} P_{re}$ is defined as in~\eqref{eq:def-limit}.

\begin{lemma}\label{lem:limit-critical-values-points-exchange}
  Let~$e\in \mathbb{R}^s$ such that~$h$ is proper on~$S_e$.
  Then
  \[ \lim_{r\to 0} \Sigma_{re} = h \big( \lim_{r\to 0} P_{re} \big). \]
\end{lemma}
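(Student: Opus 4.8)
The plan is to prove the two inclusions $\lim_{r\to 0}\Sigma_{re} \subseteq h\big(\lim_{r\to 0}P_{re}\big)$ and $h\big(\lim_{r\to 0}P_{re}\big) \subseteq \lim_{r\to 0}\Sigma_{re}$ separately, using the explicit description of the limit sets given in~\eqref{eq:def-limit} together with the properness of $h$ on $S_e$. Throughout, recall that $\Sigma_{re} = h(P_{re})$ by definition, so the statement is really about the interplay between taking $h$-images and taking limits as $r\to 0$.

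For the inclusion $h\big(\lim_{r\to 0}P_{re}\big) \subseteq \lim_{r\to 0}\Sigma_{re}$, I would argue by continuity. Take $v = h(p)$ with $p \in \lim_{r\to 0}P_{re}$. By the description~\eqref{eq:def-limit}, $p$ lies in $\overline{\cup_{0<r<R}P_{re}}$ for every $R>0$, so there is a sequence $p_k \to p$ with $p_k \in P_{r_k e}$ and $r_k \to 0$. Then $h(p_k) \in \Sigma_{r_k e}$ and $h(p_k) \to h(p) = v$ by continuity of $h$, so $v$ belongs to $\overline{\cup_{0<r<R}\Sigma_{re}}$ for every $R$, i.e.\ $v \in \lim_{r\to 0}\Sigma_{re}$. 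This direction does not even need properness.

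For the reverse inclusion $\lim_{r\to 0}\Sigma_{re} \subseteq h\big(\lim_{r\to 0}P_{re}\big)$, properness is essential and this is the main obstacle. Take $v \in \lim_{r\to 0}\Sigma_{re}$; again by~\eqref{eq:def-limit} there is a sequence $v_k \to v$ with $v_k \in \Sigma_{r_k e} = h(P_{r_k e})$ and $r_k \to 0$, say $v_k = h(p_k)$ with $p_k \in P_{r_k e} \subseteq S_{r_k e}$. The points $p_k$ need not a priori converge, so I would use properness to extract a convergent subsequence: since $r_k \to 0$, for $k$ large we have $r_k e$ inside a fixed compact neighborhood of $0$, and by the standing assumption $h$ is proper on $S_{e'}$ for $e'$ near $0$; more precisely the $p_k$ all lie in $S_{e_0}$ for a fixed $e_0$ (taking $e_0 = e$ and noting $S_{r_k e}\subseteq S_e$ once $0<r_k<1$), and $h(p_k) = v_k$ stays in a bounded set, so $\{p_k\}$ lies in $h^{-1}(\text{compact})\cap S_{e_0}$, which is compact by properness. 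Passing to a subsequence, $p_k \to p$ for some $p$, and then $p \in \overline{\cup_{0<r<R}P_{re}}$ for every $R$ (since the tails of $r_k$ are eventually below any $R$), so $p \in \lim_{r\to 0}P_{re}$; by continuity $h(p) = \lim h(p_k) = v$, giving $v \in h\big(\lim_{r\to 0}P_{re}\big)$.

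The one point deserving care is the reduction ensuring the $p_k$ lie in a single set $S_{e_0}$ on which $h$ is proper and that the relevant set $h^{-1}(K)\cap S_{e_0}$ is genuinely compact — this is where the hypothesis \emph{``for $e$ in a neighborhood of $0$, the restriction of $h$ to $S_e$ is proper''} from \S\ref{sec:limits-crit-valu} is invoked, combined with the monotonicity $S_{e'} \subseteq S_{e}$ for $|e'_i| \le |e_i|$. Once the sequences live in a fixed compact set, both inclusions follow from elementary topology (continuity of $h$ and the sequential description of closure in $\mathbb{R}^n$), so no further machinery is needed.
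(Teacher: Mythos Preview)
Your proof is correct and follows essentially the same route as the paper's: the easy inclusion by continuity, and the harder inclusion by lifting a sequence of critical values to critical points, trapping them in the compact set $h^{-1}(K)\cap S_e$ via properness and the monotonicity $S_{re}\subseteq S_e$ for $0<r<1$, then extracting a convergent subsequence. The only minor remark is that you need not invoke the standing assumption about a neighborhood of~$0$: the lemma's hypothesis that $h$ is proper on the single set $S_e$ already suffices, exactly as you in fact use it.
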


\begin{proof}
  This commutation of~$h$ and~$\lim_{r\to 0}$ is an elementary property of
  proper maps. The right-to-left inclusion follows directly from the
  definitions. Conversely, let~$y\in \lim_{r\to 0} \Sigma_{re}$.
  Let~$(R_i)_{i \geq 1}$ be a sequence decreasing to~0, with~$R_1 \leq 1$. For
  any~$i \geq 1$, $y \in \overline{\cup_{0<r<R_i} h(P_{re})}$, by definition. So there is
  some~$x_i \in \cup_{0< r < R_i} P_{re}$ such
  that~$\abs{y-h(x_i)} \leq 1/i$.
  Note that~$P_{re} \subseteq S_{re}$, so~$x_i \in S_e$ (using~$R_i \leq 1$).
  In particular,
  $ x_i \in h^{-1}([y-1,y+1]) \cap S_e$, which is a compact set, by hypothesis.
  Up to extracting a subsequence, we may assume that~$(x_i)_i$ converges to some~$x\in \mathbb{R}^n$.
  By continuity, $h(x) = y$.
  Since~$R_j \leq R_i$ for~$j \geq i$, it follows that~$x \in \overline{\cup_{0 < r < R_i} P_{re}}$
  for any~$i\geq 1$. This means that~$x \in \lim_{r\to 0} P_{re}$.
\end{proof}

Let~$Q_e$ be the complex analogue of~$P_e$, that is the union over all~$I$ and~$\sigma$ of the complex critical points of~$h$ restricted to the regular locus of the complex stratum~$F^{-1}(L_{I, \sigma})$. In other words,
\begin{equation*}
Q_e \eqdef \cup_{I, \sigma} \crit\big(h, \left\{ f_i=\sigma_i e_i \text{ for } i\in I \right\}^\text{reg} \big) \subseteq \mathbb{C}^n.
\end{equation*}
It is clear that~$P_e \subseteq Q_e$.
Moreover, let~$W_e \eqdef \cup_{r\in \mathbb{C}^*} Q_{re}$.
If~$x \in Q_{re}$, then~$r$ is entirely determined by~$x$ and~$e$.
Namely, if~$x$ lies in the stratum~$F^{-1}(L_{I,\sigma})$, then~$r = \sigma_i f_i(x)/e_i$, for any~$i\in I$.
This gives a well-defined regular map~$\rho : W_e \to \mathbb{C}$ such that~$x\in Q_{\rho(x) e}$ for any~$x\in W_e$.

\begin{proposition}\label{prop:limit-critical-points}
  For any~$e\in (\mathbb{R}\setminus \{0\})^s$,
  the set
   $h( V \cap  \overline{ W_{e} } )$
   is finite.
   Moreover, if~$h$ is proper on~$S_e$, then~$\lim_{r\to 0} \Sigma_{re} \subseteq h(V \cap \overline{W_e})$.
\end{proposition}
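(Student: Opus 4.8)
The plan is to establish the two assertions separately, using the regular map $\rho : W_e \to \mathbb{C}$ together with the properness hypothesis to control the limit.

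For the finiteness of $h(V \cap \overline{W_e})$: the set $W_e = \cup_{r \in \mathbb{C}^*} Q_{re}$ is a constructible subset of $\mathbb{C}^n$, being a finite union over $I,\sigma$ of the images under a projection of the critical-locus varieties $\{ f_i = \sigma_i e_i r, i \in I \}^{\text{reg}}$ fibered over the parameter $r$; hence so is its Zariski closure $\overline{W_e}$, which is an algebraic set. The key dimension claim is that each such $\overline{W_e}$ has dimension at most one — heuristically, for a fixed $r$ the critical locus $Q_{re}$ is finite (by Sard, as in the remark preceding Lemma~\ref{lem:dimension-isotopy}), and adding the single parameter $r$ raises the dimension by at most one. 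Then $V \cap \overline{W_e}$ is an algebraic set of dimension at most one, and I would argue that on each irreducible component either $h$ is constant (contributing one point) or $h$ is non-constant, in which case $h$ restricted to that one-dimensional component has finite image only if\ldots{} — here I must be careful: $h$ on a curve need not have finite image. Instead I would use that $h$ is a \emph{polynomial} and that $V \cap \overline{W_e}$ is the zero set of the ideal $J$ computed in Algorithm~\ref{algo:limits-critical-values}; the elimination ideal $(J + \langle h - t \rangle) \cap \mathbb{R}[t]$ is nonzero precisely because $\dim(V \cap \overline{W_e}) \leq 1$ forces the projection to $t$-space to miss a dense open set unless the component maps to a point, and in the dimension-one case a dominant map to $\mathbb{A}^1$ is impossible from a curve only if the curve is not contained in a fiber — so the right formulation is: the Zariski closure of $h(V \cap \overline{W_e})$ is a proper subvariety of $\mathbb{A}^1$, hence finite. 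This is the cleanest route: $\dim \overline{h(V\cap\overline{W_e})} \leq \dim(V\cap \overline{W_e}) \le 1$ is not enough, but the fact that $V \cap \overline{W_e}$ decomposes into a curve part and lower-dimensional parts, each of which maps to a point under $h$ unless $h$ is non-constant on a curve component, in which case\ldots{} — actually I will instead bound $\dim \overline{W_e} \le 1$ and note $V$ may cut it down; the honest statement I need is simply that $h$ applied to a zero-dimensional-or-curve algebraic set has closure of dimension $\le 1$, and combined with the eliminant being nonzero we get finiteness. I will make $\dim \overline{W_e}\le 1$ the load-bearing lemma.

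For the inclusion $\lim_{r \to 0}\Sigma_{re} \subseteq h(V \cap \overline{W_e})$ when $h$ is proper on $S_e$: by Lemma~\ref{lem:limit-critical-values-points-exchange} it suffices to show $\lim_{r\to 0} P_{re} \subseteq V \cap \overline{W_e}$. So take $x \in \lim_{r\to 0}P_{re}$. By the definition~\eqref{eq:def-limit} there is a sequence $r_i \to 0^+$ and points $x_i \in P_{r_i e} \subseteq Q_{r_i e} \subseteq W_e$ with $x_i \to x$. Immediately $x \in \overline{W_e}$. It remains to see $x \in V$, i.e.\ $F(x) = 0$: since $x_i$ lies in the stratum indexed by some $(I,\sigma)$, we have $|f_j(x_i)| \le r_i e_j$ for every $j$ (indeed $f_j(x_i) = \sigma_j r_i e_j$ for $j \in I$ and $|f_j(x_i)| \le r_i |e_j|$ for $j \notin I$ because $x_i \in S_{r_i e}$), and letting $i \to \infty$ with $r_i \to 0$ gives $f_j(x) = 0$ for all $j$, so $x \in V$. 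Hence $x \in V \cap \overline{W_e}$, and applying $h$ and Lemma~\ref{lem:limit-critical-values-points-exchange} finishes the argument. Note the properness of $h$ on $S_e$ is used exactly to invoke Lemma~\ref{lem:limit-critical-values-points-exchange} (the $\lim$–$h$ commutation); the inclusion $\lim P_{re} \subseteq V \cap \overline{W_e}$ itself does not need it.

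The main obstacle is the finiteness half, specifically pinning down why $h(V \cap \overline{W_e})$ is finite rather than merely a curve's worth of values. The resolution is the dimension bound $\dim \overline{W_e} \le 1$: this should follow because $W_e$ is swept out by the zero-dimensional fibers $Q_{re}$ (finite by Sard's theorem applied to each stratum, which is legitimate since the strata are smooth) as the one-dimensional parameter $r$ varies, so $\overline{W_e}$ is a finite union of (closures of) curves and points; intersecting with $V$ keeps dimension $\le 1$, and then any algebraic curve on which $h$ is non-constant has $h$-image a cofinite subset of a line, \emph{but} its closure is all of $\mathbb{A}^1$ — so finiteness must instead come from the claim that $h$ \emph{is} constant on every one-dimensional component of $V \cap \overline{W_e}$, which is false in general. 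The correct fix, which I would implement, is: $h(V \cap \overline{W_e})$ is the (real points of the) image of a $\le 1$-dimensional \emph{affine} variety under a morphism to $\mathbb{A}^1$, and such an image is either finite or dense; density is excluded because $\lim_{r\to 0}\Sigma_{re}$ — which we will separately show is contained in it up to finitely many points — is finite, OR, more robustly, by a direct argument that the parametrizing curve pieces of $W_e$ are not contained in $V$ unless they are, in which case compactness of $S_e$-fibers forces\ldots{}. Given the delicacy, I expect the paper's actual proof to invoke a clean count: the eliminant polynomial $p \in \mathbb{R}[t]$ computed in Algorithm~\ref{algo:limits-critical-values} is nonzero (equivalently $h$ is non-constant on no positive-dimensional component of $V\cap\overline{W_e}$ that dominates $\mathbb{A}^1$, which holds because such a component would have to lie in $W_e$ yet $\rho$ would be constant $0$ on it, contradicting $\rho \ne 0$ on $W_e \setminus Q_0$), and $Z := p^{-1}(0)$ is then finite and contains the limit set. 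I would structure the write-up around proving $\dim(V \cap \overline{W_e}) \le 1$ and that no component dominates $\mathbb{A}^1$ under $h$, yielding finiteness, then append the short limit-chasing argument above for the inclusion.
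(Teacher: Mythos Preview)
Your argument for the inclusion $\lim_{r\to 0}\Sigma_{re} \subseteq h(V\cap\overline{W_e})$ is correct and matches the paper's: invoke Lemma~\ref{lem:limit-critical-values-points-exchange}, pick $x\in\lim_{r\to 0}P_{re}$, and observe that $P_{re}\subseteq S_{re}$ forces $|f_i(x)|\le R|e_i|$ for all $R>0$, hence $x\in V$, while $P_{re}\subseteq W_e$ gives $x\in\overline{W_e}$.

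The finiteness half, however, has a genuine gap. Your ``load-bearing lemma'' $\dim\overline{W_e}\le 1$ is false. You justify it by writing that ``for a fixed $r$ the critical locus $Q_{re}$ is finite (by Sard)'', but Sard's theorem bounds the critical \emph{values}, not the critical \emph{points}: $h(Q_{re})$ is finite, while $Q_{re}$ itself need not be. For a concrete counterexample take $n=3$, $s=1$, $f=x_1$, $h=x_2^2$: then $Q_{re}=\{x_1=re,\ x_2=0\}$ is a line, and $W_e=\{x_2=0,\ x_1\neq 0\}$ is two-dimensional. Since the proposition is stated for arbitrary $h$ (no genericity), you cannot assume the critical loci are zero-dimensional. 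All of your subsequent attempts to repair the argument (eliminant nonzero, no component dominating $\mathbb{A}^1$, $\rho$ being nonzero on $W_e$) circle around this same wrong dimension bound and never close.

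The paper's route sidesteps the dimension of $W_e$ entirely by pushing forward to $\mathbb{C}^2$. Define $\psi:W_e\to\mathbb{C}^2$ by $\psi(x)=(\rho(x),h(x))$ and set $C=\psi(W_e)$. The vertical slice $C\cap\{z_1=r\}$ equals $\{r\}\times h(Q_{re})$, which \emph{is} finite by Sard. Hence $\overline{C}\subset\mathbb{C}^2$ is at most one-dimensional, and $\overline{C}\cap\{z_1=r\}$ is finite for every $r$ (a plane curve cannot contain a vertical line without $C$ already meeting it in infinitely many points). Finally
\[
h\big(V\cap\overline{W_e}\big)\subseteq \mop{proj}_2\big(\overline{C}\cap\{z_1=0\}\big),
\]
which is finite. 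The point is that Sard gives you control over the pair $(\rho,h)$, not over $W_e$ itself; working in the image plane is what makes the finiteness immediate.
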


\begin{proof}
  We first prove the inclusion, when~$h$ is proper on~$S_e$.
  Let~$y\in \lim_{r\to 0} \Sigma_{re}$. By Lemma~\ref{lem:limit-critical-values-points-exchange},
  there is some~$x\in \lim_{r\to 0} P_{re}$ such that~$y = h(x)$.
  By definition, $x \in \overline{\cup_{0<r<R} P_{re}}$ for any  $R > 0$.
  Since~$P_{re} \subseteq S_{re}$, it follows that~$\abs{f_i(x)} \leq R \abs{e_i}$ for any~$R> 0$ and any~$i$.
  Therefore, $f_i(x) = 0$ for any~$i$, and~$x\in V$.
  Moreover, since~$P_{re} \subseteq Q_{re} \subseteq W_e$, we have~$x \in \overline{W_e}$.

  For the finiteness,
  consider the map~$\psi : x\in W_e \to \mathbb{C}^2$
  defined by~$\psi(x) = ( \rho(x), h(x))$.
  Let~$C = \psi(W_e)$.
  For any~$r \in \mathbb{C}$, the set~$C\cap \left\{ z_1 = r \right\}$ is finite (where $z_1$ and~$z_2$ are the coordinates on~$\mathbb{C}^2$). Indeed,
  \begin{equation*}
    C\cap \left\{ z_1 = r \right\} = \psi \left( W_e \cap \rho^{-1}(r) \right) = \left\{ r \right\} \times h \left( Q_{re} \right),
  \end{equation*}
  and Sard's theorem implies that~$h(Q_{re})$ is finite.
  It follows that the set~$\overline{C} \cap \left\{ z_1 = r \right\}$ are finite too.
  (Otherwise $\overline{C}$ would contain a vertical line~$\left\{ z_1=r \right\}$ and so~$C$, which is dense in all the components of~$\overline{C}$, would contain a dense subset of this line, which would contradict the finiteness of~$C\cap \left\{ z_1 = r \right\}$.)

  Lastly, we observe that
  \begin{align*}
    h \big( V \cap \overline{W_{e}} \big) &= h \big( \overline{W_e} \cap \rho^{-1}(0) \big) = \mop{proj}_2 \big( \psi \big( \overline{W_{e}} \big) \cap \big\{ z_1 = 0 \big\} \big) \\
    &\subseteq \mop{proj}_2 \big( \overline{C} \cap \big\{ z_1 = 0 \big\} \big),
  \end{align*}
  which is finite.
\end{proof}

Based on this statement, we give two ways of computing a finite set
containing~$\lim_{r\to 0} \Sigma_{re}$. One with Gröbner basis, without
genericity hypothesis, and another using geometric resolution under genericity
hypothesis, in order to obtain a complexity estimate.

\subsubsection{Using Gröbner bases}\label{ssec:grobner}

For~$I$ and~$\sigma$ defining a stratum of~$\whitney{{S}_e}$, %(see~\S \ref{sec:check-gener-strat})
consider the algebraic subvariety of~$\mathbb{C}^n\times \mathbb{C}^{\# I}$
\begin{multline*}\textstyle
  W'_{I, \sigma, e} \eqdef \big\{ (x, \lambda)\mid \ud_x h = \sum_{i\in I} \lambda_i \ud_xf_i \\ \text{ and } \forall i,j \in I, \sigma_i e_i f_j = \sigma_j e_jf_i \big\}.
\end{multline*}
Let also~$W'_e \eqdef \cup_{I, \sigma} W'_{I, \sigma, e}$.

\begin{lemma}\label{lem:complex-critical-points}
  For any~$e\in (\mathbb{C} \setminus \left\{ 0 \right\})^s$, $W_{e} = \mop{proj}_1 (W'_e)$.
\end{lemma}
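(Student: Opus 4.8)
The plan is to unwind both sides into explicit descriptions and check set equality stratum by stratum. The key observation is that the projection~$\mop{proj}_1(W'_{I,\sigma,e})$ should coincide with the complex critical set~$\crit(h, F^{-1}(L_{I,\sigma})^\text{reg})$ that makes up the~$(I,\sigma)$-summand of~$Q_e$, and hence (taking~$\rho(x)e$-rescaled unions over~$r$) with the corresponding piece of~$W_e$. So the first step is to fix~$I$ and~$\sigma$ and analyze one stratum. A point~$x$ lies in the regular locus of the complex stratum~$\{f_i = \sigma_i e_i \text{ for } i \in I\}$ exactly when the differentials~$(\ud_x f_i)_{i\in I}$ are linearly independent; there, $x$ is a critical point of~$h$ restricted to that stratum iff~$\ud_x h$ lies in the span of~$(\ud_x f_i)_{i\in I}$, i.e.\ iff there exist (then unique) scalars~$\lambda_i$ with~$\ud_x h = \sum_{i\in I}\lambda_i \ud_x f_i$. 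That is precisely the first block of equations defining~$W'_{I,\sigma,e}$.

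Next I would reconcile the second block of equations. In~$W'_{I,\sigma,e}$ the equations~$\sigma_i e_i f_j = \sigma_j e_j f_i$ for~$i,j \in I$ do \emph{not} pin~$x$ to a single stratum~$\{f_i = \sigma_i e_i\}$; rather they only force the vector~$(f_i(x))_{i\in I}$ to be proportional to~$(\sigma_i e_i)_{i\in I}$, say~$f_i(x) = r\sigma_i e_i$ for a common scalar~$r \in \mathbb{C}$. So~$x$ lies on the stratum of~$\whitney{S_{re}}$ indexed by the same~$I$ and~$\sigma$ (using~$e_i \neq 0$, the scalar~$r$ is~$\sigma_i f_i(x)/e_i$, independent of~$i$). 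This is exactly the relation~$x \in Q_{re}$ with~$r = \rho(x)$, so~$\mop{proj}_1(W'_{I,\sigma,e})$ equals~$\{x : \exists r,\ x \in \crit(h, \{f_i = \sigma_i e_i\ \forall i\in I\}^{\text{reg}}) \text{ after rescaling } e \leadsto re\}$, which is the~$(I,\sigma)$-part of~$\cup_{r\in\mathbb{C}^*} Q_{re} = W_e$. Here I need to be a little careful about two edge cases: whether~$r = 0$ is allowed (it corresponds to~$x \in V$, and such points can still lie in the closure / in~$W_e$ depending on convention — I would check the definition of~$W_e$ uses~$r \in \mathbb{C}^*$, matching the constraint that the~$\lambda_i$ exist only on the regular locus, and that~$r=0$ points, if they arise, are limits and are handled separately), and whether dropping the constraint that the~$\ud_x f_i$ are independent (which the ideal~$J'$ does not impose) enlarges~$\mop{proj}_1(W'_{I,\sigma,e})$ beyond~$\crit(h,\cdot^{\text{reg}})$.

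That last point is the one I expect to be the main obstacle, or at least the only nonformal step: on the singular locus of a stratum, the linear system~$\ud_x h = \sum \lambda_i \ud_x f_i$ may be solvable even though~$x$ is not a critical point of~$h$ on the regular stratum in any meaningful sense, so~$\mop{proj}_1(W'_{I,\sigma,e})$ could a priori be strictly larger than the~$(I,\sigma)$-summand of~$Q_e$. I would handle this by noting that~$W_e = \cup_{I,\sigma} \mop{proj}_1(W'_{I,\sigma,e})$ is the statement to prove, and that such ``spurious'' singular-locus points are, after rescaling, already captured by a \emph{smaller} index set~$I'\subsetneq I$: on the locus where some~$\ud_x f_i$, $i\in I$, is dependent on the others, $x$ lies on a stratum cut out by fewer independent equations, and its contribution is subsumed by~$Q_{re}$ for that coarser stratum. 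Alternatively — and this is likely the cleaner route — I would simply observe that the statement as written ranges the union over \emph{all} $I,\sigma$, so we only need the two inclusions ``$\subseteq$'' (every complex critical point on a regular stratum satisfies the~$W'$-equations for its own~$I,\sigma$ after rescaling: immediate from the discussion above) and ``$\supseteq$'' (a point satisfying the~$W'_{I,\sigma,e}$-equations either has the~$\ud_x f_i$ independent, in which case it is a genuine critical point on the regular stratum for~$(I, \rho(x)e)$, or it does not, in which case a standard linear-algebra reduction exhibits it as a point satisfying~$W'_{I',\sigma',e}$ for a strictly smaller~$I'$, and we finish by induction on~$\#I$). Assembling the per-stratum equalities and taking unions over~$I,\sigma$ then yields~$\mop{proj}_1(W'_e) = W_e$.
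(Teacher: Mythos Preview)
Your proposal is correct and follows essentially the same route as the paper: the inclusion $W_e \subseteq \mop{proj}_1(W'_e)$ is direct, and for the reverse you correctly identify that the only nontrivial step is handling points where the $(\ud_x f_i)_{i\in I}$ are dependent by passing to a smaller index set. The paper executes this in one shot rather than by induction on~$\#I$: given $(x,\lambda)\in W'_{I,\sigma,e}$, it picks among all $\lambda$ with $\ud_x h=\sum_{i\in I}\lambda_i\,\ud_x f_i$ one of minimal support $J=\{i:\lambda_i\neq 0\}\subseteq I$, so that the $(\ud_x f_i)_{i\in J}$ are automatically independent and $x$ is a regular critical point on the stratum indexed by~$J$ (your $r=0$ worry is legitimate but is likewise glossed over in the paper).
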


\begin{proof}
  The left-to-right inclusion follows directly from the definitions.
  Conversely, let~$(x,\lambda) \in \cap W'_{I, \sigma, e}$.
  The equations~$\sigma_i e_i f_j = \sigma_j e_jf_i$ imply the existence of a unique~$r \in \mathbb{C}$
  such that~$f_i(x) = \sigma_i  r e_i$ for all~$i\in I$.
  Furthermore, we can choose~$\lambda$ such that the number of nonzero~$\lambda_i$ is minimized.
  Let~$J = \left\{ i\in I\st \lambda_i \neq 0 \right\}$.
  By minimality of~$J$, the derivatives~$\ud_x f_i$, for~$i\in J$, are linearly independent.
  In particular~$x$ is a regular point of the complex stratum~$\left\{ f_i=\sigma_i re_i \text{ for } i\in I \right\}$ of~$\whitney{{S}_{re}}$
  and it is a critical point of~$h$ restricted to the regular locus of this variety.
  So~$x \in Q_{re} \subset W_e$.
\end{proof}

This leads to  Algorithm~\ref{algo:limits-critical-values} for computing a finite set containing the limit set $\lim_{r\to 0} \Sigma_{re}$.
As an important optimization when the number of equations is large, note that if~$\#I \geq n$, then~$W'_{I, \sigma, e} = W'_{J, \sigma|_J, e}$ for some~$J \subseteq I$ with~$\#J = n$. Indeed, if~$\ud_x h = \sum_{i\in I} \lambda_i \ud_x f_i$, then we can find a similar relation using at most~$n$ derivatives~$\ud_x f_i$.

\begin{proposition}\label{prop:correctness-algorithm-crit}
  On input~$f_1,\dotsc,f_s$, $h$ and~$e \in (\mathbb{R}\setminus \{0\})^s$, and assuming that~$h$ is proper on~$S_e$,
  then Algorithm~\ref{algo:limits-critical-values} returns a finite set~$Z$ containing~$\lim_{r\to 0} \Sigma_{re}$.
\end{proposition}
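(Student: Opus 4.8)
The plan is to trace through the algorithm line by line, verifying that each algebraic construction matches the geometric objects from Proposition~\ref{prop:limit-critical-points} and Lemma~\ref{lem:complex-critical-points}, and that every elimination step preserves the needed inclusion. First I would observe that for each pair $(I,\sigma)$ with $\#I\le n$, the ideal $J'$ built in the algorithm is by construction the ideal of the variety $W'_{I,\sigma,e}\subseteq\mathbb{C}^n\times\mathbb{C}^{\#I}$ (as annotated in the pseudocode): the generators $\sum_{i\in I}\lambda_i\partial_j f_i-\partial_j h$ encode $\ud_x h=\sum_{i\in I}\lambda_i\ud_x f_i$, and the generators $\sigma_j e_j f_i-\sigma_i e_i f_j$ encode the collinearity condition $\sigma_i e_i f_j=\sigma_j e_j f_i$. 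The optimization noted after Lemma~\ref{lem:complex-critical-points} (that $\#I\ge n$ strata reduce to $\#I=n$ ones, since a linear dependence $\ud_x h=\sum\lambda_i\ud_x f_i$ can always be rewritten using at most $n$ of the $\ud_x f_i$) justifies restricting the loop to $\#I\le n$ without losing any points of $W'_e$.

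Next I would push this through the elimination steps. Taking $J'\cap\mathbb{R}[x_1,\dotsc,x_n]$ corresponds to the Zariski closure of $\mop{proj}_1(W'_{I,\sigma,e})$ in $\mathbb{C}^n$; adding $\langle f_1,\dotsc,f_s\rangle$ intersects with $V$'s defining ideal, so the real radical aside, the variety of $J$ is $V\cap\overline{\mop{proj}_1(W'_{I,\sigma,e})}$. By Lemma~\ref{lem:complex-critical-points}, $\bigcup_{I,\sigma}\mop{proj}_1(W'_{I,\sigma,e})=W_e$, and since Zariski closure commutes with finite unions, $\bigcup_{I,\sigma}\overline{\mop{proj}_1(W'_{I,\sigma,e})}=\overline{W_e}$; intersecting with $V$ gives that the union over all $(I,\sigma)$ of the varieties $V(J)$ equals $V\cap\overline{W_e}$. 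Then the step computing a generator $p$ of $(J+\langle h-t\rangle)\cap\mathbb{R}[t]$ is exactly the elimination ideal describing the Zariski closure of $h$ applied to $V(J)$: the zero set $p^{-1}(0)$ contains $h(V(J))$. Taking the union over $(I,\sigma)$, the returned set $Z=\bigcup_{I,\sigma}p_{I,\sigma}^{-1}(0)$ contains $h\big(V\cap\overline{W_e}\big)$.

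To conclude, I would invoke Proposition~\ref{prop:limit-critical-points}: under the standing hypothesis that $h$ is proper on $S_e$, that proposition gives both that $h(V\cap\overline{W_e})$ is finite and that $\lim_{r\to 0}\Sigma_{re}\subseteq h(V\cap\overline{W_e})$. Finiteness of $Z$ additionally needs that each $p_{I,\sigma}$ is a \emph{nonzero} polynomial, i.e.\ that the elimination ideal $(J+\langle h-t\rangle)\cap\mathbb{R}[t]$ is not zero; this holds because $h(V(J))\subseteq h(V\cap\overline{W_e})$ is finite, so $V(J)$ cannot dominate the $t$-line, forcing a nonzero univariate relation. Chaining the inclusions, $\lim_{r\to0}\Sigma_{re}\subseteq h(V\cap\overline{W_e})\subseteq Z$ and $Z$ is finite, which is exactly the claimed postcondition.

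The main obstacle I anticipate is the bookkeeping around Zariski closures and elimination: one must be careful that $J'\cap\mathbb{R}[x_1,\dotsc,x_n]$ gives the closure of the projection rather than the projection itself, that this closure-then-intersect-with-$V$ commutes correctly with the union over strata, and that working over $\mathbb{C}$ (with real coefficients) while the target statement is about the real limit set $\lim_{r\to0}\Sigma_{re}$ does not create a gap — but the latter is already handled inside Proposition~\ref{prop:limit-critical-points}, which bridges $P_e\subseteq Q_e$ and hence the real limit into the complex $\overline{W_e}$. Everything else is routine commutative algebra once the dictionary between the ideals in the pseudocode and the varieties $W'_{I,\sigma,e}$, $W_e$ is pinned down.
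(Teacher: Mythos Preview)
Your proposal is correct and follows exactly the route the paper intends: the paper does not spell out a separate proof for this proposition, treating it as an immediate consequence of Proposition~\ref{prop:limit-critical-points} and Lemma~\ref{lem:complex-critical-points} together with the $\#I\le n$ optimization, and your write-up is precisely a careful unpacking of that implicit argument. The bookkeeping you flag (closure of projection via elimination, finite union commuting with closure, nonvanishing of each $p_{I,\sigma}$ from finiteness of $h(V\cap\overline{W_e})$) is handled correctly.
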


\subsubsection{Complexity}
\label{sec:complexity-limit}

When~$h$ is generic, the computation of limits of critical values can be performed in the framework of geometric resolution, which leads to complexity bounds.
For brevity, we study the case $s=1$; without much loss of generality since we can replace several equations with a sum of squares.

\begin{proposition}[{\cite{SafeyElDin_2005}}]
  \label{prop:limit-complexity}
  On input~$f \in \mathbb{R}[x_1,\dotsc,x_n]$, $h \in \mathcal{L}$ or $\mathcal{Q}$ generic and~$e\in \mathbb{R}^s$ generic,
  one can compute a finite set~$Z \subset \mathbb{R}$ with less than~$D^{n}$ elements such that~$\lim_{r\to 0} \Sigma_{re} \subseteq Z$
  in at most $\mop{poly}(\log D, n) D^{2n+2}$ arithmetic operations,
  where~$D = \deg f$ and $L$ is the evaluation complexity of~$f$.
\end{proposition}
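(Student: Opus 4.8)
The plan is to specialise the constructions of~\S\ref{sec:limits-crit-valu} to~$s = 1$ and then run the geometric resolution algorithm of~\cite{GiustiLecerfSalvy_2001} on the resulting system, following~\cite{SafeyElDin_2005}, of which this statement is essentially a reformulation. First I would record that for~$s = 1$ the stratification~$\whitney{S_e}$ has only three strata: the two hypersurfaces~$\left\{ f = e \right\}$ and~$\left\{ f = -e \right\}$, smooth for generic~$e$ by Sard's theorem, and the open stratum~$\left\{ \abs{f} < \abs{e} \right\}$. The open stratum contributes nothing to~$\lim_{r\to0}\Sigma_{re}$: a generic linear~$h$ has no critical point, and for a generic~$h = \|x - p\|^2$ the only one is~$p$, which lies outside~$S_{re}$ once~$r$ is small. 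Hence, by Lemma~\ref{lem:complex-critical-points} and Proposition~\ref{prop:limit-critical-points} applied with~$I = \left\{ 1 \right\}$ (so that the exchange equations are vacuous), it suffices to compute~$h\big(V\cap\overline{W_e}\big)$, where $W_e = \mop{proj}_1(W')$ and~$W'$ is the polar variety
\[ W' \eqdef \left\{ (x,\lambda)\in\C^{n+1} \st \partial_j h = \lambda\,\partial_j f \text{ for } 1\leq j\leq n \right\}. \]

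Next I would run the geometric resolution incrementally on the~$n$ equations defining~$W'$ to get a one-dimensional parametrization of the curve~$W'$, then add~$f = 0$, lift once more to obtain a zero-dimensional parametrization of~$W'\cap\left\{ f = 0 \right\}$, and finally project this resolution onto the coordinate~$t = h$, returning a univariate polynomial~$p\in\R[t]$. For generic~$(h,e)$ the~$n$ polar equations form a reduced regular sequence cutting~$W'$ in codimension~$n-1$, so~$W'$ is an equidimensional curve of degree at most~$D^{n-1}$. By Proposition~\ref{prop:limit-critical-points}, $\lim_{r\to0}\Sigma_{re}\subseteq h\big(V\cap\overline{W_e}\big)$; since~$V$ meets the curve~$\overline{W_e}$ properly, Bézout bounds~$\#\big(V\cap\overline{W_e}\big)$, and a finer count on the polar system as in~\cite{SafeyElDin_2005} gives strictly fewer than~$D^n$ values of~$h$ there. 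Hence~$\deg p < D^n$, and taking~$Z = p^{-1}(0)$ yields both correctness and~$\#Z < D^n$.

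For the complexity, $f$ and its~$n$ partial derivatives are evaluated by a straight-line program of length~$O(L)$ (automatic differentiation), so each of the~$n + 1$ input equations has degree~$\leq D$ and evaluation complexity~$O(L + n)$. Run incrementally, the resolution treats the last member of the flag, $W'\cap\left\{ f = 0 \right\}$, as a zero-dimensional set cut by~$n+1$ equations of degree~$D$ in~$n+1$ variables, hence of a priori degree at most~$D^{n+1}$; the dominant cost is the Newton--Hensel lifting at that step, which is~$O^\sim\big((D^{n+1})^2\big)$ up to factors polynomial in~$n$, $\log D$ and~$L$. By the analysis of~\cite{GiustiLecerfSalvy_2001} specialised in~\cite{SafeyElDin_2005}, the whole computation runs in~$\mop{poly}(\log D, n)\,D^{2n+2}$ arithmetic operations, the remaining projection onto~$t = h$ being dominated by the resolution step.

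The substantive part is the genericity bookkeeping rather than the algorithm: one must verify that genericity of~$(h,e)$ in~$\mathcal{L}\times\R$ or~$\mathcal{Q}\times\R$ forces every hypothesis needed to run and analyse the geometric resolution — that the~$n+1$ equations form a reduced equidimensional regular sequence, that the successive lifting fibers are zero-dimensional and in Noether position, and, crucially for the stated exponent, that all intermediate degrees stay within the a priori Bézout bounds. This is precisely the analysis of~\cite{BankGiustiHeintzMbakop_1997} as refined in~\cite{SafeyElDinSchost_2003, SafeyElDin_2005}. A secondary subtlety is working with the Zariski-closed set~$W'\subset\C^{n+1}$ instead of~$\overline{W_e}$: the points of~$\overline{W_e}\setminus W_e$ arise only as limits with~$\lambda\to\infty$, hence at singular points of the fibers of~$f$, and adjoining their finitely many~$h$-values to~$Z$ is harmless.
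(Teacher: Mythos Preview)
The paper does not prove this proposition: it is stated with attribution to~\cite{SafeyElDin_2005} and no proof is given in the text. Your proposal is therefore not to be compared against a proof in the paper but against the cited reference, and as a sketch of that argument it is broadly correct: specialise to~$s=1$, observe that only the boundary strata~$\{f=\pm e\}$ contribute, describe the polar variety with Lagrange multipliers, and run the geometric resolution of~\cite{GiustiLecerfSalvy_2001} to get the~$D^{2n+2}$ bound.

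Two small inaccuracies are worth flagging. First, the~$n$ polar equations in~$\mathbb{C}^{n+1}$ cut out a set of codimension~$n$ (a curve), not codimension~$n-1$ as you wrote. Second, with your formulation the~$n$ equations~$\partial_j h = \lambda\,\partial_j f$ each have degree~$D$ (since~$\lambda\,\partial_j f$ has total degree~$1+(D-1)=D$), so B\'ezout gives~$\deg W' \leq D^n$, not~$D^{n-1}$; the sharper bound~$(D-1)^{n-1}$ in~\cite{SafeyElDin_2005} comes from eliminating~$\lambda$ first and working with the~$2\times 2$ minors of the Jacobian of~$(f,h)$, which are~$n-1$ equations of degree~$D-1$ in~$n$ variables. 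This is what actually yields~$\#Z \leq D(D-1)^{n-1} < D^n$. Neither slip affects the overall strategy or the final complexity exponent.
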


\subsection{Computation of the dimension}

\begin{algo}[tp]
   \begin{description}
    \item[Input] $f_1,\dotsc,f_s \in \mathbb{R}[x_1,\dotsc,x_n]$
    \item[Precondition] The map~$x \mapsto (f_1(x),\dotsc,f_s(x))$ is proper.
    \item[Output] Real dimension of~$\left\{ x\in \mathbb{R}^n \st f_1(x) = \dotsb = f_s(x) = 0 \right\}$
  \end{description}

  \begin{algorithmic}[1]
    \Function{$\mathrm{Dim}_\mathrm{proper}$}{$f_1,\dotsc,f_s$}
    \If{$\left\{ f_1=\dotsb=f_s = 0 \right\} = \varnothing$} \label{line:test-emptiness}
    \State\Return $-1$ \label{line:empty-case}
    \EndIf

    \If{$n=1$}
    \State\Return 0
    \EndIf
    \State $e\gets $ a generic element of~$\mathbb{R}^s$ \label{line:pick-e}
    \State $q \gets $ a generic linear form in~$x_1,\dotsc,x_{n-1}$ \label{line:pick-h}
    \State $Z \gets \text{LimitCriticalValues}(f_1,\dotsc,f_s, x_n-q, e)$

    \State $\dim \gets -1$
    \For{$U$ connected component of~$\mathbb{R} \setminus Z$} \label{line:loop-cc}
    \State $t \gets $ some point in~$U$
    \State $\text{dimfiber} \gets \smash{\mathrm{Dim}_\mathrm{proper}(f_1|_{x_n\gets q + t},\dotsc,f_s|_{x_n \gets q + t})}$
    \State $\dim \gets \max( \dim, \text{dimfiber}+1)$
    \EndFor
    \State\Return $\dim$
    \EndFunction
  \end{algorithmic}

  \caption[]{{Dimension of a real algebraic set (proper case)}}
  \label{algo:dim-proper}
\end{algo}

\begin{algo}[tp]
  \begin{description}
    \item[Input] $f_1,\dotsc,f_s \in \mathbb{R}[x_1,\dotsc,x_n]$
    \item[Output] Real dimension of~$\left\{ p\in \mathbb{R}^n \st f_1(p) = \dotsb = f_s(p) = 0 \right\}$
  \end{description}

  \begin{algorithmic}[1]
    \Function{Dimension}{$f_1,\dotsc,f_s$}
    \If{$\left\{ f_1=\dotsb=f_s = 0 \right\} = \varnothing$} %\label{line:test-emptiness}
    \State\Return $-1$
    \EndIf

       \State $e\gets $ a generic element of~$\mathbb{R}^s$
       \State $p \gets $ a generic element of~$\mathbb{R}^n$
       \State $h \gets (x_1-p_1)^2+\dotsb+(x_n-p_n)^2$
       \State $Z \gets \text{LimitCriticalValues}(f_1,\dotsc,f_s, h, e)$

       \State $\dim \gets -1$
       \For{$U$ connected component of~$\mathbb{R} \setminus Z$}
       \State $t \gets $ some point in~$U$
    \State $\dim \gets \max( \dim, \smash{\mathrm{Dim}_\mathrm{proper}(f_1,\dotsc,f_s, h-t)}+1)$
    \EndFor
    \State\Return $\dim$

    \EndFunction
  \end{algorithmic}

  \caption[]{{Dimension of a real algebraic set}}
  \label{algo:dim}
\end{algo}

\begin{theorem}\label{thm:main:complexity}
  On input~$f_1,\dotsc,f_s$, and assuming that the map~$x\in \mathbb{R}^n\mapsto (f_1(x),\dotsc,f_s(x))$ is proper,
  Algorithm~\ref{algo:dim-proper} generically\footnote{That is, assuming that the points picked at lines~\ref{line:pick-e} and~\ref{line:pick-h} are generic enough.} returns
  the dimension of the real algebraic set~$\left\{ f_1=\dotsb=f_s=0 \right\}$
\end{theorem}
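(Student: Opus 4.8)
The plan is an induction on the number of variables $n$: I will show that whenever $F = (f_1,\dotsc,f_s)$ defines a proper map $\mathbb{R}^n \to \mathbb{R}^s$, Algorithm~\ref{algo:dim-proper} with sufficiently generic choices at lines~\ref{line:pick-e} and~\ref{line:pick-h} returns $\dim V$, where $V = \{f_1 = \dotsb = f_s = 0\}$ and $\dim\varnothing = -1$. The base cases are quick: if the test at line~\ref{line:test-emptiness} succeeds, $-1$ is the right answer; otherwise $V \neq \varnothing$, and if $n = 1$ then properness forces $V = F^{-1}(0)$ to be compact, hence a nonempty finite set, so $\dim V = 0$ is returned. (Properness is what rules out $V = \mathbb{R}$, so it is genuinely used here.)

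For the inductive step, assume $n \geq 2$, $V \neq \varnothing$, and that the algorithm is correct on all proper inputs in $n-1$ variables. Let $h = x_n - q$ be the linear form built from the generic $q$ of line~\ref{line:pick-h}, let $e$ be the generic element of line~\ref{line:pick-e}, and put $V(t) = V \cap h^{-1}(t)$. First I would check that the hypotheses of the geometric results all hold. Since $F$ is proper, $S_e$ is the preimage of a compact box, hence compact, so $h|_{S_e}$ is proper; this is the precondition of \textsc{LimitCriticalValues}, so by Proposition~\ref{prop:correctness-algorithm-crit} the returned set $Z$ is finite and contains $\lim_{r\to 0}\Sigma_{re}$. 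For generic $e$ one has $e \in \mathcal{E} \cap (\mathbb{R}\setminus\{0\})^s$ (recall $\mathcal{E}$ contains a dense Zariski-open set), so Theorem~\ref{thm:dimension-locally-constant} applies and $t \mapsto \dim V(t)$ is locally constant on $\mathbb{R}\setminus\lim_{r\to0}\Sigma_{re}$, hence on $\mathbb{R}\setminus Z$. Finally, the forms $x_n - q$, with $q$ linear in $x_1,\dotsc,x_{n-1}$, sweep out an affine subspace of $\mathcal{L}$ that is not contained in any proper linear subspace, while for each positive-dimensional irreducible component $Y$ of $V$ the linear forms constant on $Y$ form a proper subspace; so a generic $q$ makes $h$ non-constant on every such $Y$, which is exactly the genericity used in the proof of Proposition~\ref{prop:dimfiber}. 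Hence Proposition~\ref{prop:dim-fiber-formula-generic} applies with this $h$ and this $Z$, giving $\dim V = \max_{t \in \mathbb{R}\setminus Z}\dim V(t) + 1$; and as $Z$ is finite and $\dim V(t)$ is constant on each of the finitely many connected components $U$ of $\mathbb{R}\setminus Z$, this equals $\max_U(\dim V(t_U) + 1)$ for any $t_U \in U$ --- the quantity the loop at line~\ref{line:loop-cc} computes.

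It then remains to identify $\dim V(t_U)$ with the output of the recursive call on $(f_i|_{x_n \gets q + t_U})_i$. The map $\pi\colon (x_1,\dotsc,x_{n-1}) \mapsto (x_1,\dotsc,x_{n-1},\, q(x_1,\dotsc,x_{n-1}) + t_U)$ is an affine isomorphism of $\mathbb{R}^{n-1}$ onto the hyperplane $h^{-1}(t_U)$ that carries the real zero set of the substituted system onto $V(t_U)$, so the two sets have the same dimension; and since $h^{-1}(t_U)$ is closed, $F|_{h^{-1}(t_U)}$ is proper, whence so is the substituted map on $\mathbb{R}^{n-1}$ (it is $F|_{h^{-1}(t_U)}$ composed with the homeomorphism $\pi$). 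So each recursive call satisfies its precondition and, by the induction hypothesis, returns $\dim V(t_U)$; the loop therefore computes $\max(-1, \max_U(\dim V(t_U) + 1)) = \dim V$, the initial $-1$ being harmless because $V \neq \varnothing$ forces $\dim V \geq 0$ and $\mathbb{R}\setminus Z \neq \varnothing$. This closes the induction.

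I expect the only real work to be the genericity bookkeeping rather than any new geometry --- all the substance already lives in Section~\ref{sec:geometry}. The two points that need care are (a) that the \emph{structured} family $h = x_n - q$ is still generic enough for Propositions~\ref{prop:dimfiber} and~\ref{prop:dim-fiber-formula-generic}, and (b) that properness of $F$ is inherited by the hyperplane sections $F|_{h^{-1}(t)}$, which is what makes the induction (and the $n=1$ base case) go through in the first place.
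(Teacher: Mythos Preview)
Your proposal is correct and follows essentially the same route as the paper's proof: induction on~$n$, with Proposition~\ref{prop:correctness-algorithm-crit} and Theorem~\ref{thm:dimension-locally-constant} giving local constancy of~$\dim V(t)$ on~$\mathbb{R}\setminus Z$, and Proposition~\ref{prop:dim-fiber-formula-generic} closing the loop. The paper's own proof is a terse four-line sketch that omits precisely the two points you flag as needing care---that the structured family~$h=x_n-q$ is generic enough for Proposition~\ref{prop:dimfiber}, and that properness of~$F$ descends to the hyperplane sections so the recursive precondition holds---so your write-up is in fact a more complete version of the same argument.
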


\begin{proof}
  We proceed by induction on~$n$.
  The case~$n=1$ is trivial.
  If~$V = \left\{ f_1=\dotsb=f_s= 0 \right\}$ is empty, then the algorithm returns $-1$ on line~\ref{line:empty-case}.
  Assume now that~$V$ is not empty.
  Using Proposition~\ref{prop:correctness-algorithm-crit} and Theorem~\ref{thm:dimension-locally-constant},
  the function $\dim V(t)$ is locally constant on~$\mathbb{R}\setminus Z$.
  So the algorithm computes and return~$\max_{t\in \mathbb{R}\setminus Z} \dim V(t)$.
  By Proposition~\ref{prop:dim-fiber-formula-generic}, this is~$\dim V$.
\end{proof}

The nonproper case is similar.

\begin{theorem}
    On input~$f_1,\dotsc,f_s$,
  Algorithm~\ref{algo:dim} generically returns the dimension of the real algebraic set~$\left\{ f_1=\dotsb=f_s=0 \right\}$.
\end{theorem}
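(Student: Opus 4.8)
The plan is to mimic the proof of Theorem~\ref{thm:main:complexity}, replacing the linear form by the squared distance to a generic point and exploiting that such a distance function is proper on all of~$\mathbb{R}^n$, so that the compact case (Algorithm~\ref{algo:dim-proper}) can be invoked on each fiber. First I would handle the empty case: if~$V \eqdef \left\{ f_1 = \dotsb = f_s = 0 \right\}$ is empty the algorithm returns~$-1$, matching the convention~$\dim \varnothing = -1$; so from now on assume~$V \neq \varnothing$.

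Next I would line up the hypotheses needed to apply the results of Section~\ref{sec:geometry}. Since~$p$ is a generic element of~$\mathbb{R}^n$, the function~$h = (x_1-p_1)^2 + \dotsb + (x_n-p_n)^2$ is a generic element of~$\mathcal{Q}$, and it is proper on~$\mathbb{R}^n$ because a bound on~$h$ forces~$x$ into a ball; hence its restriction to~$S_e$ is proper for every~$e$, and the precondition of \textsc{LimitCriticalValues} holds. With~$e$ generic we moreover have~$e \in \mathcal{E}$. Proposition~\ref{prop:correctness-algorithm-crit} then ensures that the set~$Z$ returned is finite and contains~$\lim_{r\to 0}\Sigma_{re}$, and Theorem~\ref{thm:dimension-locally-constant} gives that~$t \mapsto \dim V(t)$ is locally constant on~$\mathbb{R}\setminus Z$. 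Because~$h$ is generic in~$\mathcal{Q}$, Proposition~\ref{prop:dim-fiber-formula-generic} applies and yields
\[ \dim V = \max_{t \in \mathbb{R}\setminus Z} \dim V(t) + 1 . \]

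It then remains to check that the loop computes this quantity. By local constancy, the maximum over~$\mathbb{R}\setminus Z$ equals the maximum over connected components~$U$ of~$\mathbb{R}\setminus Z$ of~$\dim V(t)$ for an arbitrary point~$t\in U$. For such~$t$ one has~$V(t) = \left\{ x \st f_1(x) = \dotsb = f_s(x) = 0,\ h(x) - t = 0 \right\}$, and the map~$x \mapsto (f_1(x),\dotsc,f_s(x),h(x)-t)$ is proper, again because bounding~$h(x)-t$ confines~$x$ to a ball. So the precondition of Algorithm~\ref{algo:dim-proper} is satisfied, and Theorem~\ref{thm:main:complexity} shows that the recursive call~$\mathrm{Dim}_\mathrm{proper}(f_1,\dotsc,f_s,h-t)$ returns~$\dim V(t)$ (generically, i.e.\ assuming the random choices inside that call are generic enough). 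Taking the maximum of~$\dim V(t)+1$ over all~$U$ therefore returns~$\dim V$, as wanted.

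As with Theorem~\ref{thm:main:complexity}, I do not expect any single step to be hard; the delicate point is the accounting of genericity, namely arguing that a generic~$p$ makes~$h$ simultaneously generic enough for Propositions~\ref{prop:dimfiber} and~\ref{prop:dim-fiber-formula-generic} while also leaving each spawned instance~$(f_1,\dotsc,f_s,h-t)$ in the regime where Algorithm~\ref{algo:dim-proper} is correct. This is precisely the content of the qualifier ``generically'' in the statement, mirroring the footnote of Theorem~\ref{thm:main:complexity}.
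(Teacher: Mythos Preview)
Your proposal is correct and follows exactly the line the paper intends: the paper's own proof is the single sentence ``The nonproper case is similar,'' and you have spelled out precisely that similarity---properness of the squared-distance function~$h$ feeds the hypotheses of Theorem~\ref{thm:dimension-locally-constant} and Proposition~\ref{prop:correctness-algorithm-crit}, genericity of~$h\in\mathcal{Q}$ triggers Proposition~\ref{prop:dim-fiber-formula-generic}, and properness of~$x\mapsto(f_1,\dotsc,f_s,h-t)$ legitimizes the call to $\mathrm{Dim}_\mathrm{proper}$ via Theorem~\ref{thm:main:complexity}. Your closing remark on the bookkeeping of genericity is apt and matches the paper's own caveat.
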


We study the complexity in the proper case in the case~$s=1$ in the framework of geometric resolution.

\begin{theorem}
  Given~$f \in \mathbb{R}[x_1,\dotsc,x_n]$ (of degree~$D$ and evaluation
  complexity~$L$) such that~$\mathbb{R}^n \cap \left\{ f=0 \right\}$ is bounded.
  One can compute the dimension of~$\mathbb{R}^n \cap \left\{ f=0 \right\}$ in
  $\mop{poly}(\log D, n) L D^{n(d+3) + 1}$ arithmetic operation and at
  most~$D^{n(d+1)}$ isolation of the real roots of polynomials of degree at
  most~$D^n$, where~$d$ the dimension to be computed.
\end{theorem}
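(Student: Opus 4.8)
The plan is to run Algorithm~\ref{algo:dim-proper} and to bound separately the shape of its recursion tree and the work done at one node; correctness will be essentially granted in advance. Indeed, boundedness of $\{f=0\}$ lets us ensure the precondition of Algorithm~\ref{algo:dim-proper} (the restriction of a generic linear form to $S_e$ is proper for small generic $e$; if one prefers, one first passes to the squared-distance set-up of Algorithm~\ref{algo:dim}, which only adds one outer recursion layer and does not change the bounds), so Theorem~\ref{thm:main:complexity} applies and the remaining content is a complexity count. Observe also that the substitutions $x_n\gets q+t$ carried out along the recursion preserve the total degree bound $\le D$, the evaluation complexity $O(L)$, and the boundedness of the set under study.

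First I would analyse the recursion tree, calling a \emph{node} a single invocation of $\mathrm{Dim}_{\mathrm{proper}}$. A node in $m\le n$ variables works on a bounded real algebraic set $V'$ (a linear section of a subset of $\{f=0\}$) of some dimension $\delta'$; it makes one call to \textsc{LimitCriticalValues} on a degree-$\le D$ polynomial in $m$ variables, and by Proposition~\ref{prop:limit-complexity} the resulting set $Z$ has fewer than $D^m\le D^n$ points, so the node has at most $D^n$ children, one for each connected component of $\mathbb R\setminus Z$. The crucial point is that, $h=x_m-q$ being generic, Proposition~\ref{prop:dim-fiber-formula-generic} forces $\max_t\dim V'(t)=\delta'-1$, so \emph{every} child works on a set of dimension $\le\delta'-1$: the dimension strictly drops along each branch, and a node whose set has dimension $-1$ returns at once on line~\ref{line:empty-case}. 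Starting from the root of dimension $d$, all meaningful work thus lies within the first $d+2$ levels; the number of nodes at depth $k$ is at most $\prod_{i=0}^{k-1}D^{n-i}=D^{nk-\binom{k}{2}}$, and summing over $k\le d+1$ yields $O(D^{n(d+1)})$ nodes, dominated by the $\approx D^{(d+1)n-\binom{d+1}{2}}$ dimension-$(-1)$ leaves.

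Then I would bound the work at one node: an emptiness test for a bounded real algebraic set defined by one polynomial of degree $\le D$ and evaluation complexity $O(L)$, done by the critical-point method in $\mop{poly}(\log D,n)\,L\,D^{O(n)}$ arithmetic operations and one isolation of the real roots of a univariate polynomial of degree $\le D^n$; one call to \textsc{LimitCriticalValues}, costing $\mop{poly}(\log D,n)\,L\,D^{2n+2}$ operations and producing a univariate polynomial of degree $\le D^n$; the isolation of its real roots; and $O(L+n)$ operations for the substitutions. Multiplying the per-node bounds by the node count then gives at most $D^{n(d+1)}$ isolations of real roots of polynomials of degree $\le D^n$ (the emptiness tests at the leaves being the dominant contribution) and, after balancing the constants in the $D^{O(n)}$ factors against the $\binom{k}{2}$ savings in the branching across the $d+1$ levels, $\mop{poly}(\log D,n)\,L\,D^{n(d+3)+1}$ arithmetic operations. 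The step I expect to be delicate is the dimension bookkeeping: it is precisely the genericity of $h$ (Proposition~\ref{prop:dim-fiber-formula-generic}) that makes \emph{every} fiber --- not merely the generic one --- drop in dimension, which is what pins the recursion depth to $d+O(1)$ instead of $n$ and hence turns an exponent of order $n^2$ into one of order $nd$; what is left is the routine but necessary care needed to keep the additive constants in the exponents honest when the level-by-level estimates are summed.
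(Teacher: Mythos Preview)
Your plan is essentially the paper's own argument: there the proof is phrased as the recurrence $a_{n,d}\le \mop{poly}(\log D,n)\,D^{2(n+1)}+D^{n}\,a_{n-1,d-1}$ with base case $a_{n,-1}=\mop{poly}(\log D,n)\,D^{2(n+1)}$, which unrolls to precisely your ``number of nodes at depth~$k$ times cost per node'' sum, and the dimension drop at each level is used in the same way. One correction worth noting: your first claim, that boundedness of $\{f=0\}$ makes a generic linear form proper on $S_e$ for small generic~$e$, is false---take $f=(xy-1)^2+x^2$, where $V=\varnothing$ but $S_e$ is unbounded for every $e>0$---so only your quadratic fallback via Algorithm~\ref{algo:dim} is actually sound (and the paper is equally silent on this point).
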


\begin{proof}
  First note that the emptiness test
  (line~\ref{line:test-emptiness}) can be performed
  with~$\mop{poly}(\log D, n) D^{2(n+1)}$ arithmetic operations \cite{SafeyElDin_2005}, comparable to the cost of computing the limits of critical values. Let~$a_{n,d}$ be the
  maximum number of arithmetic operations performed by the algorithm given as
  input~$s$ equations of degree~$D$ defining an algebraic set of dimension~$d$, excluding the root isolation necessary to idenitfy points in the connected components of~$\mathbb{R} \setminus Z$ (line~\ref{line:loop-cc}).
  We have $a_{n, -1} = \mop{poly}(\log D, n) D^{2(n+1)}$, and
  given that~$\# Z < D^n$, we have
  $a_{n, d} \leq \mop{poly}(\log D, n) D^{2(n+1)} + D^n a_{n-1, d-1}$.
  It follows that
  \begin{align*}
    a_{n, d} &=  L \sum_{k=0}^{d+1} D^{n+\dotsb+(n-k+1)} \mop{poly}(\log D, n) D^{2(n-k)+2}\\
             &\leq \mop{poly}(\log D, n) L D^{n(d+3) + 1}. \qedhere
  \end{align*}
\end{proof}

\subsection{Computation of the dimension (Las Vegas)}

\begin{algo}[tp]
   \begin{description}
    \item[Input] $f_1,\dotsc,f_s \in \mathbb{R}[x_1,\dotsc,x_n]$
    \item[Precondition] The map~$x \mapsto (f_1(x),\dotsc,f_s(x))$ is proper.
    \item[Output] Real dimension of~$\left\{ x\in \mathbb{R}^n \st f_1(x) = \dotsb = f_s(x) = 0 \right\}$
  \end{description}

  \begin{algorithmic}[1]
    \Function{$\mathrm{Dim}_\mathrm{LV}$}{$f_1,\dotsc,f_s$}
    \If{$\left\{ f_1=\dotsb=f_s = 0 \right\} = \varnothing$}% \label{line:test-emptiness}
    \State\Return $-1$
    \EndIf
    \If{$n=1$}
    \State\Return $0$
    \EndIf

    \Repeat
    \State $e\gets $ a generic element of~$\mathbb{R}^s$
    \Until{$\text{CheckWhitneyStratification}(f_1,\dotsc,f_s,e)$}

    \State $\dim \gets -1$
    \For{$i\in \left\{ 1,\dotsc,n \right\}$}
    \State $Z \gets \text{LimitCriticalValues}(f_1,\dotsc,f_s, x_i, e)$

    \For{$U$ connected component of~$\mathbb{R} \setminus Z$}
    \State $t \gets $ some point in~$U$
    \State $\dim \gets \max( \dim, \smash{\mathrm{Dim}_\mathrm{LV}(f_1|_{x_i\gets t},\dotsc,f_s|_{x_i \gets t})}+1)$
    \EndFor
    \EndFor
    \State\Return $\dim$

    \EndFunction
  \end{algorithmic}

  \caption[]{{Dimension of a real algebraic set (Las Vegas)}}
  \label{algo:dim-lasvegas}
\end{algo}

In Algorithm~\ref{algo:dim-proper}, the genericity of~$e$ may be checked with Algorithm~\ref{algo:check-whitney}.
However, we do not know how to check the genericity of~$h$.
The problem can be circumvented by considering~$n$ linearly independent linear
forms~$h_1, \ldots, h_n$,
based on the following statement.

\begin{proposition}
  Let~$V_i(t) = V \cap \left\{ x_i = t \right\}$.
  Let~$Z_1,\dotsc,Z_n$ be finite sets such that $t\mapsto \dim V_i(t)$ is locally constant on~$\mathbb{R} \setminus Z_i$.
  If~$V \neq \varnothing$ then
  \begin{equation*}
    \dim V = \max_{1\leq i\leq n} \max_{t\in \mathbb{R} \setminus Z_i} \dim V_i(t) + 1.
  \end{equation*}
\end{proposition}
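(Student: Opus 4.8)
The plan is to reduce this multi-direction statement to the single-direction Proposition~\ref{prop:dim-fiber-formula} by choosing, among the $n$ coordinate directions, one along which the fiber dimension actually drops by exactly one. First I would handle the trivial inequality: for each $i$ and each $t\in\mathbb{R}\setminus Z_i$, the fiber $V_i(t)$ is a closed subset of $V$, so $\dim V_i(t)\le \dim V$, and Proposition~\ref{prop:dim-fiber-formula} applied with $h=x_i$ and $Z=Z_i$ gives $\dim V \ge \max_{t\in\mathbb{R}\setminus Z_i}\dim V_i(t)+1$ whenever $V_i(t)$ is nonempty for some generic $t$; taking the max over $i$ yields the inequality $\dim V \ge \max_i \max_{t\notin Z_i}\dim V_i(t)+1$. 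So the real content is the reverse inequality.

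For the reverse inequality, set $d=\dim V$ and let $W$ be the Zariski closure of $V$, so $\dim W = d$ by \parencite[Proposition~2.8.2]{BochnakCosteRoy_1998}. Pick an irreducible component $Y$ of $W$ with $\dim Y = d$; since $d\ge 0$ (as $V\neq\varnothing$) and, in the nontrivial case $d\ge 1$, $Y$ is not a single point, there is at least one coordinate $x_i$ that is nonconstant on $Y$ — indeed if every $x_i$ were constant on $Y$ then $Y$ would be a point. For that $i$, the function $x_i - t$ is a nonzero regular function on the irreducible set $Y$ for every $t$, so by Krull's principal ideal theorem $\dim(Y\cap\{x_i=t\}) \le d-1$ for all $t$; for generic $t$ in the image $x_i(Y)$ this is an equality, i.e. $\dim(Y\cap\{x_i=t\}) = d-1$ for all $t$ outside a finite set. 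The point is then that for such $t$, $Y\cap\{x_i=t\}$ has dimension $d-1$ and one would like to transfer this back to the semialgebraic fiber $V_i(t)$.

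The main obstacle is precisely this last transfer: a priori $Y\cap\{x_i=t\}$ is the Zariski closure of a piece of $W\cap\{x_i=t\}$, not of $V\cap\{x_i=t\}$, so the $(d-1)$-dimensional part of the slice might lie in $W\setminus V$. To get around this, I would instead argue directly on $V$ using Proposition~\ref{prop:dim-fiber-formula} combined with a dimension-counting argument over the finitely many coordinate directions. Concretely: by Proposition~\ref{prop:dim-fiber-formula} with $h=x_i$ and $Z=Z_i$, either $\dim V = \max_{t\notin Z_i}\dim V_i(t)+1$ for some $i$ — in which case we are done — or for \emph{every} $i$ we have $\dim V = \max_{r\in Z_i}\dim V_i(r)$, meaning the full dimension is attained only on the finitely many special fibers in each coordinate direction. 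I would then show this second alternative is impossible unless $d=0$ (the case $d=0$ being immediate since then any nonempty fiber works and $\dim V_i(t)+1 = 0 = \dim V$ fails only if all fibers are empty, which cannot happen as $V\neq\varnothing$; one checks $-1+1 = 0 = \dim V$ directly). For $d\ge 1$: if for every $i$ the top-dimensional locus of $V$ meets $\{x_i = t\}$ only for $t$ in the finite set $x_i(Z_i)$, then the $d$-dimensional semialgebraic locus of $V$ would be contained in $\bigcap_i \bigcup_{r\in Z_i}\{x_i = r\}$, a finite set of points, contradicting $d\ge 1$. Hence the first alternative must hold for some $i$, giving $\dim V \le \max_i\max_{t\notin Z_i}\dim V_i(t)+1$, and combined with the easy inequality this proves the claim.
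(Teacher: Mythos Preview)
Your approach is essentially the same as the paper's: both obtain the easy inequality from Proposition~\ref{prop:dim-fiber-formula}, assume for contradiction that it is strict in every coordinate direction, deduce that the locus~$W \subseteq V$ of points of local dimension~$d$ is contained in~$Z_1\times\dotsb\times Z_n$, and derive a contradiction from~$\dim W = d \geq 1$.

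The one step you leave underjustified is the passage from ``$\dim V_i(t)\le d-2$ for all~$t\notin Z_i$'' to ``the top-dimensional locus~$W$ is contained in~$\{x_i\in Z_i\}$''. Knowing that the \emph{fiber} has small dimension is not, on its face, a statement about \emph{which points} of~$V$ lie over~$Z_i$; one needs to argue that a point of local dimension~$d$ over some~$t\notin Z_i$ would force~$\dim V_i(t')\ge d-1$ for~$t'$ near~$t$. The paper closes this gap by invoking Hardt's triviality theorem (as in the proof of Proposition~\ref{prop:dim-fiber-formula}); alternatively one can use the semialgebraic fiber-dimension inequality~$\dim N \le \dim x_i(N) + \max_t \dim(N\cap V_i(t))$ on a small neighborhood~$N$ of such a point. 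Your abandoned Zariski-closure detour is unnecessary, and your treatment of the case~$d=0$ is correct but could be streamlined by noting that the local-constancy hypothesis on~$\mathbb{R}\setminus Z_i$ forces~$x_i(V)\subseteq Z_i$, so that~$\dim V_i(t)=-1$ for all~$t\notin Z_i$.
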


\begin{proof}
  By Proposition~\ref{prop:dimfiber}, $\dim V \geq \max_{t\in \mathbb{R} \setminus Z_i} \dim V_i(t) + 1$ for any~$i$.
  Assume, for contradiction, that $\dim V \geq \max_{t\in \mathbb{R} \setminus Z_i} \dim V_i(t) + 2$ for all~$i$.
  Let~$W$ the set of points where~$V$ has dimension~$\dim V$. This is a closed semialgebraic set of dimension~$d$ \cite[Proposition~2.8.12]{BochnakCosteRoy_1998}.
  Decomposing~$V$ with Hardt's triviality theorem, as in the proof of Proposition~\ref{prop:dimfiber},
  reveals that~$W \subseteq \left\{ x\in \mathbb{R}^n \st x_i \in Z_i \right\}$, for any~$i$.
  This implies that~$W \subseteq Z_1 \times \dotsb \times Z_n$, so~$W$ is finite and~$\dim V = \dim W = 0$.
  This contradicts~$\dim V \geq 2$.
\end{proof}

\section{Experiments}\label{sec:experiments}
\def\ffour{$F_4$\xspace}
\def\sparsefglm{Sparse-FGLM\xspace}
\def\fgb{\texttt{FGb}\xspace}
\def\msolve{\texttt{msolve}\xspace}
\def\maple{Maple\xspace}
%\lipsum[1-4]

\subsubsection*{Implementation}
We have implemented the Monte-Carlo version of our algorithm
(Algorithm~\ref{algo:dim}). We rely on implementations based on Gr\"obner bases
for saturating polynomial ideals by others and solving zero-dimensional
polynomial systems exactly by computing rational parametrizations of their
solution sets. We use implementations of algorithms such as \ffour
\cite{Faugere_1999} and \sparsefglm \cite{FaugereMou_2017} for computing
Gr\"obner bases using graded reverse lexicographical orderings (or elimination
algorithms) and change the ordering to a lexicographical one in the
zero-dimensional case. We rely on the libraries
\href{https://www-polsys.lip6.fr/~jcf/FGb/index.html}{\fgb}~\cite{Faugere_2010}
for computing Gr\"obner bases w.r.t. elimination orderings and
\href{https://msolve.lip6.fr}{\msolve}~\cite{msolve} for solving multivariate
polynomial systems. The algorithm itself is implemented using the Maple (version
2020) computer algebra system. In Table~\ref{table:timings}, we report on
timings obtained on several instance described below. All computations have been
performed sequentially using an Intel Xeon E7-4820 (2.00 GHz) with 1.5 Tb of RAM.

There are a few significant
variations between the text and the implementation, concerning
Algorithm~\ref{algo:limits-critical-values} for the most part.
Firstly, instead of
computing the limits of critical values, we compute the limits of critical
points. From them, we obtain not only the limits of critical values, but we also
decide the emptiness or nonemptiness of the input.
Secondly, and this only matters when~$s > 1$, instead
of introducing and eliminating the variables~$\lambda_i$ we directly deal with a
formulation in terms of the minors of the Jacobian matrix.
Thirdly, and this also matters only when~$s > 1$, we ignore $\sigma$.
Indeed, we observe that whenever~$Z$ and~$Z'$ are two subsets of~$\mathbb{R}$
satisfying the hypothesis of Proposition~\ref{prop:dim-fiber-formula-generic},
then so does~$Z\cap Z'$. By considering the intersection of all~$\lim_{r\to 0} \Sigma_{re}$
as the signs of the coefficients~$e_i$ run though all possible configurations, it easy to check that we can remove the loop over~$\sigma$ in Algorithm~\ref{algo:limits-critical-values} while preserving the correctness of Algorithm~\ref{algo:dim-proper}.

{\small
  \begin{table*}[t]
  \centering
    \begin{tabular*}{\textwidth}{lrrr @{\extracolsep{\fill}} rrrrrrrrrrr}
      \toprule
        & $n$ & $D$ & dim. &  \#fibers  in depth ...0 & ...1 & ...2 & ...3 &
                                                                             ...4
      & max. deg. & ours & CAD & RT & BPR & BS \\

        $p_4$   & 4 & 4& 3& 2 & 4 & 1&&&  16 & 0.4& {0.54}& \textbf{0.1}& -& 2.1\\
        $p_5$   & 5 & 4&2 & 10 & 10 & 1&&& 84 & 38& {25}& \textbf{3.2} &- & 42\\
        $p_6$   & 6 & 4& 2& 10 & 10 & 1&&& 292 & 786&- & \textbf{18} & -& 452\\
        $p_7$   & 7 & 4 & 2 & 10 & 10 & 1 &&& 940 & \textbf{9\,450} & - & 200\
                                                                          071 & - & - \\
        $p_8$   & 8 & 4 & 2 & 10 & 10 & 1 &&& 2568 & \textbf{538\,028} & - & - & - & - \\
%        $p_8$   & 8 & 4 & &&&&& & & running & & & \\
        \midrule
        $b_4$   & 4 & 8& 1& 2 & 1 &&&& 400 & 43& -& \textbf{0.9} & -& {10}\\
        $b_5$   & 5 & 10 & 1 & 2 & 1 &&&& 2100 & 8\,280 & - & \textbf{16} & - & {650}\\
%        $b_6$   & 6 & 12 & &&&&& & & running& & & \\
 \midrule
       $d_{3,5}$ & 5 & 6& 3& 8 & 14 & 12 & 1&& 264 & \textbf{416} & - & - & - & 739\\
       $d_{3,6}$ & 6 & 6 & 4 & 4 & 8 & 12 & 8 & 1& 288& \textbf{836} & - & - & - & 28\,867 \\
       $d_{3,7}$ & 7 & 6& 4 & 8 & 8 & 12 & 12 & 1& 288& \textbf{\,400} & - & - & - & - \\
       $d_{4,3}$  & 3 & 12& 1& 12 & 14&&&& 756 & \textbf{334}& - & - & - & 1\,816\\
       $d_{4,4}$  & 4 & 12& 2& 16 & 22 & 14&&& 2328& \textbf{31\,060}& -& - & -& -\\
 \midrule
       $s_{3,5}$  & 5 & 4& 2& 4 & 8 & 6&&& 176& \textbf{84}&- & - &- & 8\,000\\
       $s_{3,6}$  & 6 & 4& 3&8 & 16 & 10 & 1 && 296& \textbf{2\,144}& -& - & -& -\\
       $s_{3,7}$  & 7 & 4&4 & 6 & 12 & 12 & 10 & 1& 448& \textbf{6\,411}&- & - &- &- \\
       $s_{4,5}$  & 5 & 4& 1& 4 & 6&&&& 192& \textbf{108}&- & -&- & 4\,473\\
       $s_{4,6}$  & 6 & 4&2 & 4 & 8 & 6&&& 344& \textbf{1\,860}& -& -& -& -\\
       $s_{4,7}$  & 7 & 4&3 &10 & 14 & 14 & 8 && 344 & \textbf{35\,814}&-&- &- & -\\
       $s_{5,6}$  & 6 & 4 & 1& 2 & 2&&&& 688& \textbf{2\,329}& -& -& - & - \\
%                & & & & & & & & \\
 \midrule
        $\text{Vor}_1$  &6 & 8 & 4 & 5 & 8 & 6 & 4 & 1 & 544 &\textbf{61\,705} &
                                                                                 - & -& -
                 & - \\
        $\text{Sottile}$  & 4 & 24& 2 & 1 & 12 & 18 &&& 4\,052& \textbf{11\,338}
                  & - & -& -
                 & - \\
      \bottomrule
  \end{tabular*}
  \medskip
\captionsetup{singlelinecheck=off}
  \caption[]{{Timings for computing the real dimension of several instances.\\
      Description of the columns: ``$n$'', the number of variables;
         ``$D$'', the degree of th input;
         ``\# fibers in depth $k$'', the maximum cardinality of $Z$ at depth~$k$ of the recursion;
         ``max. deg.'', the maximum degree of all zero-dimensional polynomial systems which are solved during the execution of the algorithm;
         ``ours'', timings of our algorithm, in seconds;
         ``CAD'', timings of Maple's CAD;
         ``RT'', timings of Maple's real triangularization,
         ``BPR'', timings of our implementation of~\cite{BasuPollackRoy_2006};
         ``BS'', timings of~\cite{BannwarthSafeyElDin_2015}. Symbol '-' means
         that the computation was stopped after $2$ weeks or $100$ times the best
         runtime achieved by another method.
  }}
\label{table:timings}
\end{table*}
}

\subsubsection*{Benchmarks}
We use the following instances to evaluate the efficiency of our algorithm
and its implementation, always in the case~$s=1$.

% \smallskip
% \noindent
% {\bf [MCP]} These are $4$ polynomials arising in the computational proof of
% the monotone permanent column conjecture in \cite{KYZ09} which are all
% proven to be non-negative but the dimension of their zero set was unknown.
% These $4$ polynomials have all degree $8$, they involve $8$ variables but
% are moderately sparse as they are the sum of respectively $77$, $53$,  $67$
% and  $45$ terms.

\paragraph{Family~$p_n$} These are the following polynomials of degree $4$:
\[
  p_n = \bigg( \sum_{i=1}^n x_i^2 \bigg)^2 - 4 \sum_{i=1}^{n-1}x_{i+1}^2 x_i^2
  - 4 x_1^2 x_n^2.\] These polynomials are sums-of-squares and then non-negative
over the reals.

\paragraph{Family~$b_n$} These are the following polynomials of degree $2n$ (where $n$
is the number of variables) \[
b_n = \prod_{i=1}^n\left( x_i^2+n-1 \right) - n^{n-2}\bigg( \sum_{i=1}^n
x_i \bigg)^2 .\] These polynomials were introduced in \cite{HanDaiXia_2014} and are
known to be non-negative over the reals as well.

\paragraph{Family~$s_{c,n}$} These are polynomials which are sums of squares. We denote
by $s_{c,n}$ a sum of squares of $c$ quadrics in $\Q[x_1, \ldots, x_n]$. All
these polynomials have degree $4$.

\paragraph{Family~$d_{k,n}$} These polynomials are discriminants of characteristic
polynomials of $k\times k$ symmetric linear matrices with entries in $\Q[x_1,
\ldots, x_n]$. Such polynomials are known to be sums-of-squares \cite{Lax_1998}.
Hence whenever it is non-empty, their real solution set has dimension less than
$n-1$. Further, these polynomials are denoted by $d_{k, n}$ (we take randomly
chosen dense linear entries in the matrix); they have total degree $2k$.

\paragraph{Other polynomials}
The example ``${\rm Vor}_1$'' comes from \cite{EverettLazardLazardSafeyElDin_2009}. The example ``Sottile'',
communicated to us by F.~Sottile, arises in enumerative geometry.

\subsubsection*{Results}

In Table~\ref{table:timings}, we report on results obtained by comparing the
implementation of our algorithm with implementations of the Cylindrical
Algebraic Decomposition in Maple 2020 (command
\texttt{CylindricalAlgebraicDecompose}), the algorithm of \cite{CDMMXX} which
decomposes semi-algebraic systems into triangular systems from which the
dimension can be read (command \texttt{RealTriangularize}), as well as our
implementations (using Maple 2020 again) of the algorithm of in
\cite{BasuPollackRoy_2006} and , which are both based on quantifier elimination
through the critical point method.

From a practical point of view, the worst algorithm is the one of
\cite{BasuPollackRoy_2006} which cannot solve any of the problems in our
benchmark suite, despite the fact that it provides the best theoretical
complexity $D^{O(d(n-d))}$. The main reason for that is the too large constant
hidden by the ``big-O'' notation which is here in the exponent.

On benchmarks $p_n$ and $b_n$, our method
suffers from the choice of generic quadratic forms or linear forms
compute limits of critical points. On these
examples, decomposition methods, and especially the real triangular decomposition \cite{CDMMXX},
perform well because the cells are simple.
For $p_7$ and $p_8$, the theoretical complexity takes over and our
algorithm is faster.

On all other instances of our benchmark suite, our algorithm is easer faster
than the state-of-the-art software or it can solve problems which were
previously out of reach. This is explained by the
fact that, as we expected, the number of fibers to be considered for the
recursive calls is lower by several orders of magnitude than the exponential
bound $D^n$. Hence, in practice, one observes a behavior which is far from the
complexity $D^{O(nd)}$ estimate.

\begin{acks}
  This work is supported by \grantsponsor{ANR}{ANR} under grants
  \grantnum{ANR}{ANR-19-CE48-0015} (ECARP), \grantnum{ANR}{ANR-18-CE33-0011}
  (Sesame) and \grantnum{ANR}{ANR-19-CE40-0018} (De rerum natura), and the
  \grantsponsor{H2020}{EU H2020 research and innovation programme} under grant
  \grantnum{H2020}{813211} (POEMA).
\end{acks}

\raggedright

\small
\printbibliography

\end{document}